\title{Efficient and Stable Multi-Dimensional Kolmogorov-Smirnov Distance} %
\author{Peter Matthew Jacobs}{University of Wisconsin--Madison, USA \and \url{https://jacobs269.github.io} }{}{}{}
\author{Foad Namjoo}{University of Utah, USA \and \url{https://users.cs.utah.edu/~foad27/} }{foad.namjoo@utah.edu}{}{}%
\author{Jeff M. Phillips}{University of Utah, USA \and \url{https://users.cs.utah.edu/~jeffp}}{jeffp@cs.utah.edu}{https://orcid.org/0000-0003-1169-2965}{NSF IIS-1816149, CCF-2115677, IIS-2311954, and 2421782, and  Simons Foundation MPS-AI-00010515.}
\authorrunning{P.\,M. Jacobs, F. Namjoo, and J.\,M. Phillips} %
\keywords{Kolmogorov-Smirnov Distance; Integral Probability Metrics; Computational-Statistical Runtime; Computational Geometry} %
\providecommand{\mathbbm}[1]{\text{\usefont{U}{bbm}{m}{n}#1}}
\providecommand{\R}{\mathbb{R}}
\newcommand{\dKS}{\ensuremath{\textsf{dKS}}}
\newcommand{\mdKS}{\ensuremath{\textsf{quad\text{-}KS}}}
\newcommand{\KS}{\ensuremath{\textsf{KS}}}
\newcommand{\eps}{\varepsilon}
\newcommand{\one}{\ensuremath{\mathbbm{1}}}
\newcommand{\etal}{\emph{et.{}al.}\xspace}
\newcommand{\dir}{{\rm d}}
\newcommand{\disc}{\ensuremath{\textsf{disc}}}
\newcommand{\dgen}{\ensuremath{\texttt{\textbf{d}}}}
\newcommand{\RR}{\mathcal{R}}\newcommand{\TT}{\mathcal{T}}\newcommand{\bTT}{\bar{\mathcal{T}}}
\newcommand{\KK}{\mathcal{K}}\newcommand{\FF}{\mathcal{F}}\newcommand{\XX}{\mathcal{X}}
\newcommand{\baseline}{\dKS-\textsf{Baseline}\xspace}
\newcommand{\ourAlgo}{\dKS-\textsf{Sketch}\xspace}
\newcommand{\jeff}[1]{}\newcommand{\peter}[1]{}\newcommand{\foad}[1]{}
\begin{document}

\maketitle
\pagestyle{plain}

\begin{abstract}
We revisit extending the Kolmogorov-Smirnov distance between probability distributions to the multi-dimensional setting, and make new arguments about the proper way to approach this generalization.  
Our proposed formulation maximizes the difference over orthogonal dominating rectangular ranges ($d$-sided rectangles in $\R^d$), and is an integral probability metric.  We also prove that the distance between a distribution and a sample from the distribution converges to $0$ as the sample size grows, and bound this rate.  
Moreover we show that one can, up to this same approximation error, compute the distance efficiently in $4$ or fewer dimensions; specifically the runtime is near-linear in the size of the sample needed for that error.  With this, we derive a $\delta$-precision two-sample hypothesis test using this distance.  Finally, we show these metric and approximation properties do not hold for other popular variants. 
\end{abstract}

\section{Introduction}

This paper examines an extension of the Kolmogorov-Smirnov distance to multiple dimensions (dKS).  It also advocates for its more wide-spread use for moderate dimensions.  The dKS is non-parametric, insensitive to sampled data, and we can approximately compute it up to some specified precision in time independent of the size or complexity of the data.  We provide an asymptotic analysis of its sample complexity.  

This distance is particularly useful for multi-dimensional settings where the dimensions of the data have different units.  For example, consider comparing distributions of people measured in height and weight, or distributions of weather stations that measure temperature and pressure.  Under Euclidean distance, changes in the choice of units can alter the relative importance of the dimensions in distance comparison without changing the actual information.  This can lead to pathological problems for distances between probability measures which rely on Euclidean-based ground distance, like Maximum Mean Discrepancy or Wasserstein.  Normalization procedures (mapping to 0 mean and 1 standard deviation, or mapping to the range [0,1]) are data-dependent, may be sensitive to outliers, and can change as additional data is discovered.  
In particular, operating in units for temperature is challenging, since it does not even have a common notion of 0.  Like the univariate Kolmogorov-Smirnov distance, our notion is invariant to these units choices.

In fact, this new distance can be applied to any d-dimensional space where each coordinate has a well-defined order.  It does not require real-values or a Euclidean embedding.  For simplicity, we will focus our discussion on the real-valued case.

\subsection{The Multi-Dimensional Kolmogorov-Smirnov Distance}

Consider two probability distributions $\mu$ and $\nu$ with support in $\R^d$.  For a point $x \in \R^d$ let $x = (x_1, \ldots, x_d)$ be its $d$-coordinates.  Define $\prec$ as an operation on $p,q \in \R^d$ that returns true for $p \prec q$ if $p_j < q_j$ for all $j \in 1...d$.  Analogously $\preceq$ is true for $p \preceq q$ iff $p_j \leq q_j$ for all $j \in 1...d$.

As in Bickel~\cite{bickel1969tests}, we define the multi-dimensional Kolmogorov-Smirnov distance between two distributions $\mu, \nu$ in terms of their standard distribution functions $F_\mu(z) = \mu(\{x \in \mathbb{R}^d : x \preceq z\})$ \cite{durrett2019probability,rachev2013methods}.

\[
\dKS(\mu,\nu) = \sup_{z \in \R^d} \left| \int_{x \preceq z} \mu(x) \dir x - \int_{x \preceq z} \nu(x) \dir x \right|.  
\]

\begin{figure}[t]
\centering
\includegraphics[width=0.5\textwidth]{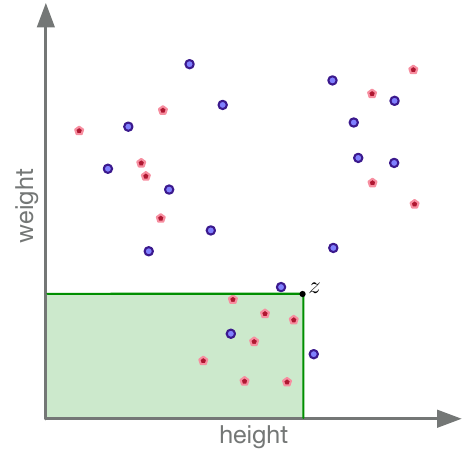}    
\caption{$\dKS(P,Q)$ uses maximizing $z \in \R^2$ to compare green region between distributions $P$ (blue $\circ$) and $Q$ (red $\star$).}
\label{fig:dKS-example}
\end{figure}

Now that we have stated the general measure theoretic form, we can restate for when the measure is defined on point sets $P,Q \subset \R^d$.  This is the natural restriction from $\mu$ to where the points $p \in P$ are Dirac deltas, each with weight $1/|P|$; and similarly, where $Q$ replaces $\nu$. We interpret point sets $P$ and $Q$ as empirical distributions with uniform weights. Then, as illustrated in Figure \ref{fig:dKS-example}, 
\[
\dKS(P,Q) = \max_{z \in \R^d} \left| \frac{|\{p \in P \mid p \preceq z\}|}{|P|} - \frac{|\{q \in Q \mid q \preceq z\}|}{|Q|} \right|.  
\]

In the $d=1$ case, both are precisely the KS distance.  
Note that, unlike the case $d=1$, the maximizing point $z$ does not need to be one of the data points from $P,Q$.  

\subsection{Properties of $\dKS$}
This paper shows the following properties
of the examined $\dKS$.  
\begin{itemize}
    \item It is an \emph{integral probability metric}, which implies that it is a  pseudometric.  Moreover, we show it satisfies the strong identity property and thus is a full metric.  
    \item It has sample complexity $n_{\eps} = O(\frac{1}{\eps^2}(d+ \log(1/\delta)))$, meaning that if one takes that many samples from any input distribution (say a continuous probability distribution), computes $\dKS$ on the sample, then with probability at least $1-\delta$, the estimated distance is off by at most $\eps$.  
    \item  We show that to compute $\dKS$ up to error $\eps$ (the best we can hope for under a sample of size $n_{\eps}$), then this can be done in time near-linear in $n_\eps$ in dimensions $d=1$ (implied by \cite{DKW56} since the 1950s), but also for $d=2,3,4$.  The respective runtimes are $O(n_\eps \log n_\eps)$, $O(n_\eps \log^3 n_\eps)$, and $O(n_\eps \log^{21} n_\eps)$ up to smaller order terms.  
    We also show that for $d > 4$, discovering an algorithm with runtime near-linear in $n_\eps$ would break a widely held conjecture in computer science, so it is unlikely to be possible.  
    \item From these algorithms we derive level-$\delta$ two-sample hypothesis tests that run in the above near-linear runtimes, times 
    $O(\log 1/\delta)$, for dimensions $d=2,3,4$.  Moreover, unlike tests based on asymptotic distributional approximations or Monte Carlo simulations, our test is precise: it guarantees an upper bound on the probability of rejection under the null hypothesis even for finite sample sizes (see Section~\ref{sec:hTesting} for details).
\end{itemize}

In another framing, this paper provides near-optimal \emph{computational-statistical runtime} (CSR) bounds for \dKS.  The goal in CSR bounds is to produce $\eps$-approximate results and $\delta$-level hypothesis tests where in addition to standard computational bounds, we allow access to probability distributions only through samples, which take $O(1)$ time each.  There is no input size and CSR bounds depend on $1/\eps$ (for $\eps$ tolerated error) and $1/\delta$ (for $1-\delta$ statistical power).  

There exists another popular higher-dimensional extension of the Kolmogorov-Smirnov distance to higher dimensions, \mdKS, which we review in the next section.  We show that this distance is not stable with respect to certain data sets; that is, adding or removing a single data point can significantly change the distance, hence there can be no associated sample complexity result as we state for $\dKS$.  
Hence, we believe our examined variant and CSR algorithm is the right approach towards this multi-dimensional generalization.

\section{Historical Context and Sample Complexity}
\label{sec:context+sample}

The Kolmogorov-Smirnov Test (the KS Test) is a long-studied non-parametric exact goodness of fit test in statistical hypothesis testing.  Given a reference distribution $\nu$ with domain $\R$ and an empirical observation $P_n$ representing $n$ iid observations from an unknown distribution $\mu$, the KS Test seeks to determine if we can be convinced that $\mu$ is distinct from $\nu$ (reject the null hypothesis that $P_n \sim \nu$).  It does so by calculating the so-called KS statistic, which is 
\[
\KS(\nu,P_n) = \max_{t \in \R} \left| \int_{x = -\infty}^t \nu(x) \dir x - \int_{x = -\infty}^t P_n(x) \dir x \right|.
\]
That is, $\KS(\nu,P_n)$ is the maximum deviation in values between the CDFs of the reference $\nu$ and empirical $P_n$ distributions.  For $P_n$ representing a probability distribution in the way specified above, and for a fixed $t$, the term $\int_{x = -\infty}^t P_n(x) \dir x$ is the fraction of observations $p \in P_n$ with value at most $t$.  

The test was proposed by Kolmogorov in 1933~\cite{Kolmogorov33}, and a table for calculating p-values published by Smirnov in 1948~\cite{smirnov1948table}.  Dvoretzky, Kiefer, and Wolfowitz in 1956~\cite{DKW56} showed remarkably that if the observations $P_n$ are indeed from $\nu$, then for any $\eps > 0$ that 
\[
\Pr[ \KS(\nu,P_n) \geq \eps ] \leq 2 \exp(-2 \eps^2 n),
\]
with constant $2$ in front of the exponential later resolved by Massart~\cite{massart1990tight}.  

Due to its simplicity, ease of calculation (algorithmically, and due to Smirnov's tables, and Dvoretzky, Kiefer, and Wolfowitz' strong asymptotic bound), and its ability to work with small $n$, this test gained popularity.  However, others have recently critiqued the test, making it unfortunately taboo~\cite{FB19,IP-KS} in some circles; these expositions focus mainly on the context of comparing to parametric distributions, since those parameters need to be estimated first, which detracts from the test's power.  
This paper does not focus on the 1-sample statistical hypothesis testing for this reason; rather we focus on its use as a distance between two distributions $\mu, \nu$ and derive a two-sample hypothesis test.  
These critiques also cite the lack of extension to more than one dimension, which this paper also addresses.

\subsection{Other Multi-Dimensional Extensions}
Others have explored high-dimensional extensions~\cite{bickel1969tests,peacock1983two,hagen2021accelerated,press1988kolmogorov,justel1997multivariate}.   
As mentioned above, $\dKS$
aligns with the formulation proposed by Bickel~\cite{bickel1969tests}.  He proposed a precise permutation two-sample test for $\dKS$; it controls type 1 error, but is computationally impractical.   It relies on an exponentially large number of $\dKS$ computations in the sample size $n$ to determine the decision threshold.  
A recent survey~\cite{stolte2024methods} reiterates that there are no practical implementations of this test.  
Moreover, via Chen and Friedman~\cite{chen2017new}, the survey states that it requires sample complexity exponential in dimension.  In contrast, this paper will show $\dKS$ can be computed efficiently, and in particular has linear sample complexity in dimension.

Separately, Peacock~\cite{peacock1983two} proposed a similar extension to the one we discuss.  There are two notable departures.  
First, Peacock was concerned with the lack of symmetry between 
$\left| \int_{x \preceq z} \mu(x) dx - \int_{x \preceq z} \nu(x) \dir x \right|$ and 
$\left| \int_{z \preceq x} \mu(x) dx - \int_{z \preceq x} \nu(x) \dir x \right|$; note the $x \preceq z$ vs. $z \preceq x$ defining the integral.  In particular, in our formulation, the choice of $z$ defines one quadrant (or in $d$-dimensions, an orthogonal $1/2^d$th) of the domain, and he argues we should consider all $2^d$ such subsets.  These can be retrieved by multiplying each different subset of the coordinates by $-1$ and then computing as we propose.  We argue that iterating through these options is not necessary to make the distance robust or well-formed.  Although, in moderate dimensions $d$, it can be handled with tolerable overhead if desired.  

Second, while the actual computation of this distance was not discussed in detail in the work by Peacock~\cite{peacock1983two}, several heuristics have become default due to available software.  As introduced by Fasano and Franceschini~\cite{fasano1987multidimensional}, focusing on $2$ dimensions, and then programmed as a \emph{Numerical Recipe} by Press and Teukolsky~\cite{press1988kolmogorov}, they proposed only evaluating quadrants where $z = p_i$ for some data point $p_i \in P$.  Now there are only $2^d n$ quadrants to consider; for constant $d$ this is linear in $n$.  We refer to this as $\mdKS$.
This has more recently been extended so it is empirically efficient in high-dimensions by Hagen \etal~\cite{hagen2021accelerated}, and shown empirically similar to the full set of quadrants Peacock proposed by Lopes \etal~\cite{lopes2007two}.   However, we will argue that this restriction to quadrants defined by the $n$ data points in $\mdKS$ is not always stable.   In particular, in Section \ref{sec:instability}, we construct settings where this restricted set of queries does not provide a stable distance under the random sample drawn $P \sim \mu$.  Hence, no sample complexity bounds can be proven. 

Note that Xiao~\cite{xiao2017fast} proposed following the original Peacock formulation in low dimensions, and provided a popular R package for $d=2$ and $d=3$.  This algorithm takes $O(n^2)$ for $d=2$ and in general $O(n^d)$ in $d$ dimensions; it saves a factor $O(n)$ off a naive approach by incrementally updating the value for adjacent quadrants considered.

Alternatively, Sadhanala \etal~\cite{sadhanala2019higher} generalize the Kolmogorov-Smirnov test to $k$-order total variation bounds, and provide efficient algorithms for small values of $k$.  We note that this is not a high-dimensional generalization, it focuses on data in $1$ dimension and allows for more general (similar to $k$-degree polynomials) test sets to measure how they differ.  Similarly, our work also makes the connection that this family of distances satisfies the criteria for an integral probability metric~\cite{muller1997integral}.

Finally, an alternative extension to higher dimensions is via a reduction to a series of 1-dimensional KS problems~\cite{justel1997multivariate,paik2025integral}.  Justel \etal\cite{justel1997multivariate} considers either a series of joint distributions or the product of all distributions.  These are computationally expensive to handle and do not directly compare the original distributions against each other.  
Moreover, it is not clear whether changing the order of processing dimensions could affect the result.  More recently, Paik \etal~\cite{paik2025integral} find the $1$-dimensional projection that maximizes discrepancy in the $k$th moment, similar to Sadhanala \etal~\cite{sadhanala2019higher}, called the Radon-Kolmogorov-Smirnov (RKS) test, and is also an IPM.  However, this RKS is not tolerant to changes in units of each coordinate -- like MMD and Wasserstein, and unlike our explored \dKS.

\subsection{Connection to Range Spaces}

The above discussion proclaimed that the Dvoretzky \etal~\cite{DKW56} sample complexity result was remarkable.  We next explain why, by showing it is part of a more general phenomenon via
the following connection to range spaces, which were mostly formalized and studied much later.  These will provide a natural way to think about the generalization of KS to higher dimensions and also justify our one-quadrant formulation.  

Consider the $d=1$ case and with a fixed value $t \in \R$.  We are interested to bound 
\begin{equation}\label{eq:KS-Chernoff}
\left| \int_{x = -\infty}^t \mu(x) \dir x - \frac{1}{|P|} \left| \{p_i \in P \mid p_i \leq t\} \right| \right|.  
\end{equation}
For each observation $p_i \in P$ we can form a binary random variable $X_i \in \{0,1\}$ that is $1$ iff $p_i \leq t$.  If we draw $P \sim \mu$, the probability that $X_i = 1$ is exactly $\int_{x = -\infty}^t \mu(x) \dir x$.  So we can apply a Chernoff bound on the average of these random variables, to determine its accuracy as measured by \eqref{eq:KS-Chernoff}: it exceeds $\eps$ with probability $2 \exp(-2 \eps^2 n)$.  This looks like the Dvoretzky \etal~\cite{DKW56} bound, yet their bound holds for all values of $t$ simultaneously.  

Leveraging a simple argument borrowed from range spaces, we can get a weaker version of the Dvoretzky \etal~\cite{DKW56} bound.  We realize that if we choose a set of $T = 2/\eps$ values $\{t_1, \ldots, t_T \}$ so that $t_i$ satisfies $\int_{x=-\infty}^{t_i} \mu(x) dx = i \cdot \eps/2$.  Then if we have at most $\eps/2$ sampling error on each of these $t_i$, any other query $t$ can be rounded up to one of these points, and incur at most $\eps/2$ rounding error in addition to the $\eps/2$ Chernoff error for $t_i$, and thus a total of $\eps$ error.  
We can extend the Chernoff bound with a union bound over these $T$ checkpoints (we do not even need to know what they are), and the probability of at most $\eps$ error on all of them is at most $2 T \exp(-2 (\eps/2)^2 n)$.  To see that this $T = 2/\eps$ term is not very significant, setting the probability of exceeding our bound to $\delta = 2 (2/\eps) \exp(- \eps^2 n / 2)$ and solving for $n$ yields $n = \frac{2}{\eps^2} \ln (4/(\delta \eps))$ where this extra factor from $T$ shows up only inside the $\log$ term.  

However, we can make another observation, that the choices of $t$ define \emph{ranges}.  
A \emph{range space} is a pair $(X,\mathcal{A})$ where $X$ is a \emph{ground set} and $\mathcal{A}$, the ranges, is a family of subsets of $X$; commonly they are defined by geometric objects such as disks, rectangles, or halfspaces.  In this case, they are one-sided intervals of the form $(-\infty, t]$.  
A celebrated theorem by Vapnik and Chervonenkis~\cite{VC71} (and refined by Li, Long, and Srinivasan~\cite{LLS01}) showed that if the range space has bounded VC-dimension $\nu$, then one can guarantee that with $n = O((1/\eps^2)(\nu + \log(1/\delta)))$ iid samples $P$ from distribution $\mu$, for each range $A \in \mathcal{A}$ that with probability at least $1-\delta$ we have
\[
\max_{A \in \mathcal{A}} \left| \int_{x \in A} \mu(x) \dir x - \frac{1}{n} |P \cap A| \right| \leq \eps.  
\]
We call a set $P$ satisfying the above condition on $(\mu,\mathcal{A})$ an \emph{$\eps$-sample} (or an \emph{$\eps$-approximation} of $(\mu,\mathcal{A})$).  
The VC-dimension for one-sided intervals is a constant, so this finally asymptotically matches the bounds of Dvoretzky \etal~\cite{DKW56}.

\subsection{Dominating Rectangular Ranges}

We next consider a family of ranges $\mathcal{R}_d$ on ground set $X \subset \R^d$, with elements $R_z \in \mathcal{R}_d$ defined by a point $z = (z_1, ..., z_d) \in \R^d$.  We call this set $\mathcal{R}_d$ the \emph{dominating rectangles} in $\R^d$.  In particular, $R_z = \{x \in \R^d \mid x \preceq z\}$.  For any point set $P \subset \R^d$, the VC-dimension $(P, \mathcal{R}_d)$ is $d$ \cite{matousek1999geometric}.  Thus we can achieve the following bound:

\begin{theorem}\label{thm:SC-Rd}
For a distribution $\mu$ on $\R^d$, then sampling $n = O((1/\eps^2)(d + \log(1/\delta)))$ points $P \sim \mu$ will, with probability at least $1-\delta$, have
\[
\dKS(\mu,P) \leq \eps.
\]
\end{theorem}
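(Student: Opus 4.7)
The plan is to cast $\dKS(\mu,P)$ directly as a range-space discrepancy on the family $\RR_d$ of dominating rectangles, and then invoke the Vapnik--Chervonenkis / Li--Long--Srinivasan $\eps$-sample theorem quoted in the preceding subsection. Concretely, rewriting the supremum that defines $\dKS$, one has
\[
\dKS(\mu,P) = \sup_{z \in \R^d} \left| \int_{x \in R_z} \mu(x)\,\dir x - \frac{1}{|P|}|P \cap R_z| \right| = \sup_{R \in \RR_d} \left| \int_{x \in R} \mu(x)\,\dir x - \frac{1}{|P|}|P \cap R| \right|,
\]
so that an $\eps$-sample of $(\mu,\RR_d)$ is exactly a set $P$ with $\dKS(\mu,P) \le \eps$. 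Thus the whole theorem reduces to (i) bounding the VC-dimension of $(X,\RR_d)$ by $d$, and (ii) plugging $\nu = d$ into the stated sample-complexity bound $n = O((1/\eps^2)(\nu + \log(1/\delta)))$.

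For the lower bound on VC-dimension, I would exhibit the standard basis configuration: for points $e_1,\dots,e_d \in \R^d$ and any target subset $S \subseteq \{1,\dots,d\}$, the point $z$ with $z_i = 1$ when $i \in S$ and $z_i = -1$ when $i \notin S$ gives $e_i \preceq z$ iff $i \in S$, so these $d$ points are shattered. The upper bound, that $d+1$ points can never be shattered, is where the only real work lies. I would argue that $\RR_d$ is the intersection over $j=1,\dots,d$ of the coordinate-wise half-line families $\HH_j = \{\{x : x_j \le t\} : t \in \R\}$, each of VC-dimension $1$; then invoke the standard VC closure-under-intersection bound which yields VC-dim at most $O(d \log d)$, or tighten to exactly $d$ by the direct combinatorial argument: given any $d+1$ points, pick the one whose $j$th coordinate is maximal for each $j$, producing (by pigeonhole) some point $p^*$ that is coordinate-wise dominated by none of the others in every coordinate simultaneously, so the subset $\{p_1,\dots,p_{d+1}\} \setminus \{p^*\}$ cannot be picked out as $P \cap R_z$ without also including $p^*$.

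The main obstacle is just that last VC-dim upper bound. If getting exactly $d$ turns out to be delicate, it is perfectly acceptable to quote the weaker $O(d\log d)$ closure bound, since the theorem's stated sample complexity already absorbs any constant factor and writes $d$ only up to $O(\cdot)$. Once either bound is in hand, step (ii) is immediate: applying the Li--Long--Srinivasan refinement with $\nu = O(d)$ gives $n = O((1/\eps^2)(d + \log(1/\delta)))$ iid samples $P \sim \mu$ such that with probability at least $1-\delta$ every range $R \in \RR_d$ has sample error at most $\eps$, which by the rewriting above is exactly the claim $\dKS(\mu,P) \le \eps$.
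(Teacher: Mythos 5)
Your proposal follows exactly the paper's (largely implicit) argument: the paper simply asserts that the VC-dimension of $(X,\RR_d)$ is $d$ and then invokes the Vapnik--Chervonenkis / Li--Long--Srinivasan $\eps$-sample bound, which is precisely your steps (i) and (ii), and your pigeonhole upper bound on the VC-dimension (the only part the theorem actually needs) is correct. One small slip worth fixing: your shattering witness for the lower bound does not work as written, since setting $z_i = -1$ for $i \notin S$ forces $(e_j)_i = 0 > z_i$ and hence excludes \emph{every} $e_j$ from $R_z$ whenever $S$ is a proper subset; take $z_i = 0$ (or any value in $[0,1)$) for $i \notin S$ instead. This is inessential to the theorem, which only requires the upper bound on the VC-dimension.
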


\subsection{Rectangular Range Spaces; Algorithms and Bounds}
\label{sec:rect}

While there are other commonly considered ranges (e.g., defined by halfspaces), it will be useful here to review what is known about ranges defined by rectangles.

Let $(X, \mathcal{T}_d)$ be the range space defined by axis-aligned \emph{rectangles} in $\R^d$.  As a generalization of the dominating rectangles $\mathcal{R}_d$, these are defined by two points $z,z' \in \R^d$ so $R_{z,z'} = \{ x \in X \mid z' \prec x \preceq z\}$.  The VC-dimension of this range space is $2d$ \cite{matousek1999geometric}.  Since we can set $z'$ so it is dominated by all points in $X$, when $X$ is a point set or bounded, then the ranges in $(X,\TT_d)$ are a superset of the ranges in $(X,\RR_d)$.  Hence, results which approximate $(X,\TT_d)$, which is more common in the literature, also apply directly to $(X,\RR_d)$.  

For particular range spaces, there can be more compact $\eps$-approximations (as needed for Theorem \ref{thm:SC-Rd}), than those achieved through iid sampling.  For $d > 1$, these are typically constructed via low-discrepancy colorings.  Consider a finite ground set $X$ of size $|X|=n$; if $X$ is infinite, we can first start with a sufficiently large random sample.    The objective here is to find a coloring $\chi : X \to \{-1, +1\}$ so that 
\[
\disc_\chi(X,\RR) = \max_{R \in \RR} \sum_{x \in X \cap R} \chi(x)
\]
is as small as possible.  For any range spaces $(X,\mathcal{R})$ with constant VC-dimension, then a random coloring achieves $\disc_\chi(X,\mathcal{R}) = O(\sqrt{n})$ \cite{Cha01,matousek1999geometric}; but in some cases, significantly better bounds are possible.  
As outlined in Chazelle's book~\cite{Cha01} (c.f. \cite{phillips2009small} for slight formalization and generalization), one can iteratively apply this low-discrepancy coloring to achieve $\eps$-approximation of a range space.  At a high-level, one creates a low-discrepancy coloring $\chi : X \to \{-1,+1\}$, then discards all points $X^- = \{x \in X \mid \chi(x) = -1\}$ with color $-1$.  Then repeat the process on the remaining (approximately half of the) points, until a desired size is reached.  With a few algorithmic tweaks to this process, if one can create a color $\chi$ in $O(n^\alpha \log^\beta n)$ time so that $\disc_\chi(X,\mathcal{R}) = O(\log^\gamma n)$, then in $O(n s_\eps^{\alpha-1} \log^\beta s_\eps)$ time, one can create an $\eps$-sample of size $s_\eps = O((1/\eps) \log^\gamma (1/\eps))$~\cite{phillips2009small}.  

After a long series of work, Bansal and Garg~\cite{bansal2017algorithmic} established a polynomial time $O(n^\alpha)$ algorithm for unspecified $\alpha > 1$, that can find $\chi$ such that $\disc(X, \mathcal{T}_d) = O(\log^d n)$.  More recently, Dwivedi \etal~\cite{dwivedi2019power} showed that (in an online algorithm), one could achieve $\chi$ such that $\disc_\chi(X,\mathcal{T}_d) = O(\log^{2d+1} n)$ with high probability, in $O(n \log^d n)$ time; a similar result is implied by Bansal \etal~\cite{bansal2021online}.  Combined with \cite{phillips2009small}, this implies the following result:

\begin{lemma}\label{lem:fast-eps-Td}
For range space $(X,\mathcal{T}_d)$ with point set $X \subset \R^d$ of size $n$, then in $O(n \log^d (1/\eps))$ time, one can compute an $\eps$-sample with high probability, of size $O((1/\eps) \log^{2d+1} (1/\eps))$.
\end{lemma}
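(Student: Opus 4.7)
The plan is to combine two already-cited ingredients: Chazelle's halving-based $\eps$-sample construction, as formalized by Phillips, which turns any low-discrepancy coloring into an $\eps$-sample algorithm; and the online discrepancy bound of Dwivedi \etal\ (with a comparable result by Bansal \etal), which in $O(n \log^d n)$ time produces a coloring $\chi$ with $\disc_\chi(X,\TT_d) = O(\log^{2d+1} n)$ with high probability. The paper has already articulated the black-box template: a coloring with discrepancy $O(\log^\gamma n)$ computable in $O(n^\alpha \log^\beta n)$ time yields an $\eps$-sample of size $s_\eps = O((1/\eps)\log^\gamma(1/\eps))$ in $O(n\, s_\eps^{\alpha-1} \log^\beta s_\eps)$ time. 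Substituting $\alpha=1$, $\beta=d$, $\gamma=2d+1$ gives the stated size $s_\eps = O((1/\eps)\log^{2d+1}(1/\eps))$ and a runtime of $O(n \log^d s_\eps) = O(n \log^d(1/\eps))$, exactly as claimed.

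I would verify the instantiation by unpacking the halving procedure. Starting from $X$, apply the Dwivedi \etal\ coloring, discard the $-1$ class to obtain $X^{(1)}$ of size $\approx n/2$, and repeat to produce $X^{(2)}, X^{(3)}, \ldots$, stopping at the first $X^{(k)}$ with $|X^{(k)}| \le s_\eps$. For any range $R \in \TT_d$, the discrepancy errors accumulate geometrically across levels and are dominated, after the implicit $2^k$ rescaling, by the final level's contribution $O(\log^{2d+1} s_\eps)$, which corresponds to at most $\eps$ relative mass error on $R$ and thereby certifies the $\eps$-sample property. The high-probability guarantee survives the $O(\log n)$ halving stages by a union bound, since each Dwivedi \etal\ coloring succeeds with probability $1 - 1/\mathrm{poly}(n)$.

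The one subtlety in the runtime accounting is that a naive reading of the halving scheme yields a first-stage cost of $O(n \log^d n)$ rather than $O(n \log^d(1/\eps))$. The standard fix is to first take a cheap random $(\eps/2)$-sample of size $N = O((1/\eps^2)(d+\log(1/\delta))) = \mathrm{poly}(1/\eps)$ via \Cref{thm:SC-Rd}, in time $O(n)$, and then run the discrepancy-halving on that reduced set at cost $O(N \log^d N) = O(N \log^d(1/\eps))$; composing the two $\eps/2$-approximation guarantees preserves the $\eps$-sample property, and the total time becomes $O(n \log^d(1/\eps))$. The main technical obstacle sits outside this lemma, in the cited Dwivedi \etal\ and Bansal--Garg constructions; once they are taken as black boxes, the present proof is a direct parameter substitution into the Chazelle--Phillips framework, together with this two-stage reduction to peel the $\log n$ factor down to $\log(1/\eps)$.
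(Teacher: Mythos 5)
Your proposal matches the paper's (implicit) proof: Lemma~\ref{lem:fast-eps-Td} is obtained exactly by substituting the Dwivedi \etal coloring parameters ($\alpha=1$, $\beta=d$, $\gamma=2d+1$) into the Chazelle--Phillips halving template, and the paper offers no argument beyond that substitution. The runtime subtlety you flag is genuine if one halves all of $X$ directly, but it is precisely what the cited ``algorithmic tweaks'' of \cite{phillips2009small} (merge--reduce on $\mathrm{poly}(s_\eps)$-size chunks) are there to absorb, giving $\log^d s_\eps$ rather than $\log^d n$ per coloring call; your alternative of first drawing a $\mathrm{poly}(1/\eps)$-size random $(\eps/2)$-sample and composing the two approximation guarantees is an equally valid way to reach the stated $O(n \log^d (1/\eps))$ bound.
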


\section{Computing the d-Dimensional Kolmogorov-Smirnov Distance}
\label{sec:runtime}
\vspace{-1mm}

For $1$-dimensional discrete distributions $P,Q$ (represented as point sets), computing the KS distance is fairly straight-forward.  Sort $P$ and $Q$, and then scan them in order from smallest to largest maintaining the CDFs.  Keep track of the largest difference on the points $x \in P \cup Q$ that one observes.  

However, in $d$ dimensions, there is not a single sorted order.  Yet, we can leverage an extension of an implicit observation from above: that there were at most $|P \cup Q|$ places we need to check in $\R$.  In $d$-dimensions, this scaling is not linear, but still controlled for constant $d$.  Indeed, there are at most $|P \cup Q|^d$ distinct dominating rectangular ranges, since there are $d$ coordinates in the defining vector $z$, and if the $j$th coordinate $z_j$ changes by a small enough amount that the exact same subset of the points in $P \cup Q$ have their $j$ coordinate still less than $z_j$, then it contains the same subset.  We can simply enumerate these at most $|P \cup Q|^d$ subsets, and evaluate the fraction of $P$ and of $Q$ in each to determine $\dKS(P,Q)$.

\begin{algorithm}[b]
\caption{$\text{\baseline}(P,Q)$ \;\;\;  for $P,Q  \subset \R^2$ and $|P|=|Q|=n$}

\label{alg:baseline}
\begin{algorithmic}[1]

\State Sort $P \cup Q$ by their $y$-coordinate
\State $M \gets 0$ \Comment{maximum difference found so far}

\For{(each $x$-coordinate $x_i$ in $P \cup Q$)}  \Comment{try all $x$-thresholds}
    \State $s \gets 0$ \Comment{running $P - Q$ count}
    \For{(each $j$th point in $y$-sorted order of $P \cup Q$)}
        \If{$x_j \leq x_i$}  \Comment{Otherwise point is outside the rectangle}
            \State Update $s \gets \begin{cases} s+1 & \text{ if } j \in P \\
                s-1 & \text{ if } j \in Q
            \end{cases}$ 
            \State $M \gets \max \{ M, |s| \}$
        \EndIf
    \EndFor
\EndFor

\State \textbf{return} $M /n $ \Comment{normalized KS distance}
\end{algorithmic}
\end{algorithm}

Let's say $|P| = |Q| = n$.  
Naively, since checking how many of each are in a range takes $O(dn)$ time, this process now requires $O(d^2 n^{d+1} + d n \log n)$ time.  One can reduce this to $O(d n^d + d n \log n)$ by (like in the $d=1$ case, or as in Xiao~\cite{xiao2017fast}) realizing that between two adjacent ranges, we can update their counts in $O(1)$ time. For example, fix $(d-1)$ of the coordinates (repeat this $O(n^{d-1})$ time for each possible choice) and then scan over the one remaining coordinate in sorted order using the $d=1$ algorithm.  The only difference from a direct reduction to the $1$-dimensional algorithm, is for each point, we need to verify if it satisfies the constraints from the first $d-1$ coordinates we have fixed; this can be done in $O(d)$ time.  For $d \ll n$, the runtime is dominated by still enumerating the $O(n^d)$ ranges.  
We outline the $d=2$ version of this approach in Algorithm \ref{alg:baseline} and refer to this as the \baseline algorithm within experiments in Section \ref{sec:epirical}.

\subsection{Approximate Computation}
\label{ssec:algo}

Note that with $n$ points in $P$ (and $Q$), based on the tightness of the Chernoff bound used above, the error induced on $\dKS$ from sampling is at least $\eps = 1/\sqrt{n}$. 
So computing the exact value of $\dKS$ beyond a resolution of some value $\eps$, should not be especially meaningful.  This begs the question, if we only need to produce a value $\hat v$ so  that $|\hat v - \dKS(P,Q)| \leq \eps$, then can we compute this significantly faster than $O(n^d)$?  The answer, perhaps surprisingly, is \emph{yes}!

We describe an approach laid out in work by Matheny \etal~\cite{matheny2016scalable,matheny2018computing}, designed for spatial anomaly detection.  They first argue~\cite{matheny2016scalable} we can sample two sets of points, a net $M$ of size $|M| = m = O((d/\eps) \log (d/\eps \delta))$ points, and a sample $S$ of size $|S| = s = O((1/\eps^2)(d+\log(1/\delta)))$.  Then they use $M$ to define the boundaries of the ranges to consider, and it is sufficient to evaluate those ranges on the large set $S$.  Since we want error about $\eps \approx 1/\sqrt{n}$, the second set $S$ will not be a helpful reduction for us; but the net $M$ will be.  In fact, when analyzing rectangular ranges i.e., $\mathcal{T}_d$, they further argue~\cite{matheny2018computing} that they can directly define $O((d/\eps)^d)$ ranges, instead of using the sample $M$.  Sort points along each axis, and on each axis choose $k \geq d/\eps$ evenly spaced points in the sorted order $P \cup Q$; i.e.,  for $|P \cup Q| = 2n$, select points at each index $\lfloor (i+\frac{1}{2}) \frac{2n}{k}\rfloor$ for integers $i \in \{0,1, \ldots, k-1\}$ in this order. 
The measure difference $\left| \frac{|P \cap R_{z}|}{|P|} - \frac{|Q \cap R_{z}|}{|Q|}\right|$ evaluated on any range $R_z$ versus the nearest range $R_{z'}$ from this stratified set $\mathcal{R}_M$, is $d \cdot 1/k = O(d \cdot \eps/d) = O(\eps)$; illustrated in Figure \ref{fig:range-snap}.  This can be seen as losing at most $(\eps/d)$-error along each of $d$ coordinate axes.  

\begin{figure}
\centering
\includegraphics[width=0.65\textwidth]{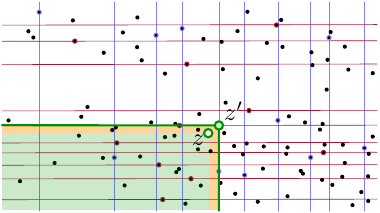}  
\caption{Adaptive grid construction with blue points evenly spaced in $x$-axis, and red ones evenly spaced in $y$-axis.  
Illustration of snapping a range defined by $z$ to one defined by $z'$.}
\label{fig:range-snap}
\end{figure}

However, we no longer want to sweep all points in one of the dimensions. This would lead to an undesired $O(n \cdot 1/\eps^{d-1})$ term in the runtime. We only want to check these $O((d/\eps)^d)$ ranges; we call these \emph{the grid}.  So we first accumulate each of $n$ points into its grid cell in $O(d)$ time, using their place in the pre-computed $d$ sorted orders along each axis; that took $O(dn \log n)$ time.  Then for each cell we can compute the total weight of $P$ and $Q$ dominated by that cell using dynamic programming in $O(1/\eps^d)$ time, and find the max difference in the same time.   This is the sketched representation of $P$ and $Q$.
Ultimately this leads to an algorithm with time $O(d n \log n + (d/\eps)^d)$.  
The pseudocode is shown in Algorithm \ref{alg:ouralgo} for $d=2$, and referred to as \ourAlgo in Section \ref{sec:epirical}.

\begin{lemma}\label{lem:fast-d}
Consider two sets of points $P,Q \subset \R^d$ with $|P \cup Q|=n$, and an error parameter $\eps > 0$.  We can compute $\hat v$ so that $|\hat v - \dKS(P,Q)| \leq \eps$ in $O(d n \log n + (d/\eps)^d)$ time.  
\end{lemma}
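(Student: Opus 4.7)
The plan is to follow the outline in the paragraph preceding the lemma and fill in the two discretization arguments (snapping, and a $d$-dimensional prefix-sum count), keeping track of the runtime bookkeeping.

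First I would build the snapping grid. Sort $P \cup Q$ along each of the $d$ coordinate axes (cost $O(dn \log n)$). Along each axis pick $k = \lceil 2d/\eps \rceil$ threshold values $\{a_j^{(1)}, \ldots, a_j^{(k)}\}$ at ranks $\lceil in/k \rceil$ of the sorted order, yielding an axis-aligned grid $\mathcal{G}$ of at most $k^d = O((d/\eps)^d)$ cells. Define the snapping map: for any $z \in \R^d$, let $z'$ have $j$-th coordinate equal to the largest threshold $a_j^{(i)} \leq z_j$ (or $-\infty$ if none). Then $R_{z'} \subseteq R_z$, and they differ only in a union of $d$ axis-aligned slabs, each containing at most $n/k$ points from $P \cup Q$. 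Consequently, for either $P$ or $Q$, the fractional count in $R_z$ and in $R_{z'}$ differ by at most $d/k \leq \eps/2$, so
\[
\left| \tfrac{|P \cap R_z|}{|P|} - \tfrac{|Q \cap R_z|}{|Q|} \right| - \left| \tfrac{|P \cap R_{z'}|}{|P|} - \tfrac{|Q \cap R_{z'}|}{|Q|} \right|
\]
is bounded by $\eps$ in absolute value. Thus the maximum of the $\dKS$ objective over the finite snapped grid of corners is within $\eps$ of $\dKS(P,Q)$, which is what we want to estimate.

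Next I would populate the grid cheaply. Having precomputed the sorted orders, each point can be located to its grid cell in $O(d)$ time via $d$ rank lookups, yielding integer count arrays $C_P, C_Q : \mathcal{G} \to \mathbb{Z}$ in total time $O(dn)$. To obtain the dominating-rectangle counts $F_P(c) = |\{p \in P : p \preceq \text{corner}(c)\}|$ and similarly $F_Q$, apply a standard $d$-dimensional prefix sum: iterate through coordinate axes $j = 1, \ldots, d$, and for each axis replace the array value at each cell by its partial sum along that axis. This costs $O(d \cdot k^d) = O((d/\eps)^d)$ after absorbing the leading $d$ into the base (or can simply be stated as $O(d(d/\eps)^d)$ and noted to be dominated for any meaningful $\eps$). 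Finally, sweep the $k^d$ grid corners once more and output
\[
\hat v = \max_{c \in \mathcal{G}} \left| \tfrac{F_P(c)}{|P|} - \tfrac{F_Q(c)}{|Q|} \right|.
\]

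Combining everything, the running time is $O(dn \log n)$ for sorting, $O(dn)$ for locating points, and $O((d/\eps)^d)$ for the prefix sum and sweep. Correctness follows from the snapping claim: every supremizing $z$ maps to a grid corner $z'$ at which the objective is within $\eps$ of the objective at $z$, and $\hat v$ is exactly the maximum over the grid corners, so $|\hat v - \dKS(P,Q)| \leq \eps$. The one subtle step is the snapping bound, since we must verify that the symmetric-difference region $R_z \setminus R_{z'}$ can be covered by $d$ axis slabs, each of which has its count-discrepancy bounded by $n/k$ by construction of the thresholds on rank; everything else is straightforward bookkeeping.
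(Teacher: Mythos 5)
Your proposal is correct and follows essentially the same route as the paper: sort along each axis, snap every dominating rectangle to a rank-based grid of $O((d/\eps)^d)$ corners losing $O(\eps)$ per distribution, bucket points in $O(d)$ each, and recover dominated counts by a $d$-dimensional prefix sum before sweeping for the maximum. The only nit is that a slab holding at most $n/k$ points of $P \cup Q$ gives a fraction at most $(n/k)/|P|$ of $P$, not $1/k$, so $k$ should be inflated by a constant (or thresholds chosen from $P$ and $Q$ separately) --- a constant-factor adjustment that does not affect the stated runtime.
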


\begin{algorithm}[t]
\caption{$\text{\ourAlgo}(P,Q)$; \;\;\;  for $P,Q  \subset \R^2$ with $|P|=|Q|=n$, and $\eps \in (0,1)$}
\label{alg:ouralgo}
\begin{algorithmic}[1]

\State Sort $P \cup Q$ by their $x$-coordinate, and also by $y$-coordinate.  Set $k = 2/\eps$

\State Choose $k$ evenly-spaced (by rank) values $G_x = \{x_1, \ldots, x_k\}$ in $x$-order; $x_k = \max_{z \in P \cup Q} z_x$

\State Choose $k$ evenly-spaced (by rank) values $G_y = \{y_1, \ldots, y_k\}$ in $y$-order; $y_k = \max_{z \in P \cup Q} z_y$

\State Set each $G_{i,j} = 0$ for $i,j \in [0,\ldots, k]$

\For{(each $z = (z_x,z_y) \in P \cup Q$)}   \Comment{Assign points to Grid $G$}
   \State Identify $(i,j)$ so $x_{i-1} < z_x \leq x_i$ and $y_{j-1} < z_y \leq y_j$.  
   \State $G_{i,j} \gets \begin{cases} 
   G_{i,j} + 1/n &  \text{ if } z \in P \\ 
   G_{i,j} - 1/n & \text{ if } z \in Q.\end{cases}$  
\EndFor

\For{ $i=1$ \textbf{to} $k$}    \Comment{Calculate at each $G_{i,j}$ the total it dominates}
  \State $C \gets 0$
  \For{  $j=1$ \textbf{to} $k$}
      \State $C \gets C + G_{i,j}$
      \State $G_{i,j} \gets G_{i-1,j} + C$
  \EndFor
\EndFor

\State \textbf{return}  $\hat v = \max_{i,j} |G_{i,j}|$
  \Comment{Return $\varepsilon$-approximate $\dKS(P,Q)$}

\end{algorithmic}
\end{algorithm}

Now most interestingly, for $d=2$ where $\eps \approx 1/\sqrt{n}$ is the sample complexity bound, this has $O(n \log n)$ runtime, which is the same as for $d=1$.  
We can formalize the results in the computational-statistical runtime setting, as follows,  for the two-stage approach where we (1) uniformly sample $n = O((1/\eps^2)\log(1/\delta))$ points from each distribution, and then (2) use Lemma \ref{lem:fast-d} to find the approximately maximum dominating set.   

\begin{theorem}\label{thm:dKS-d}
Consider two distributions $\mu,\nu$ with support on $\R^d$ for constant $d$, and where it takes $O(1)$ time to draw a random sample.   Set two parameters $\eps, \delta > 0$.  With probability at least $1-\delta$, we can compute a value $\hat v$ so that $|\hat v - \dKS(\mu, \nu)| \leq \eps$ in time 
\[
O((1/\eps^2) \log(1/\delta)\log((1/\eps) \log(1/\delta)) + (1/\eps^d))).
\]
\end{theorem}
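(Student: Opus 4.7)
The plan is to exhibit the natural two-stage algorithm: first draw iid point samples $P \sim \mu$ and $Q \sim \nu$ to reduce the continuous problem to a discrete one, and then invoke the deterministic grid approximation of Lemma \ref{lem:fast-d} on $P$ and $Q$ to produce $\hat v$. Correctness follows by combining Theorem \ref{thm:SC-Rd} (sample complexity of $\dKS$ via VC-dimension of dominating rectangles) with Lemma \ref{lem:fast-d} (grid-snap approximation error) through the triangle inequality, which is available because $\dKS$ is a genuine metric, as established among the listed properties earlier in the paper.

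Concretely, I would split the budget by setting the target error to $\eps/3$ for each of the three sources and the failure probability to $\delta/2$ for each of the two sampling events. By Theorem \ref{thm:SC-Rd}, drawing $n = O((1/\eps^2)(d + \log(1/\delta)))$ iid samples from $\mu$ yields $\dKS(\mu,P) \leq \eps/3$ with probability at least $1 - \delta/2$, and independently $\dKS(\nu,Q) \leq \eps/3$ for an iid sample of the same size from $\nu$. Invoking Lemma \ref{lem:fast-d} with error parameter $\eps/3$ on $P \cup Q$ yields $\hat v$ with $|\hat v - \dKS(P,Q)| \leq \eps/3$. A union bound combined with the triangle inequality
\[
|\hat v - \dKS(\mu,\nu)| \;\leq\; |\hat v - \dKS(P,Q)| + \dKS(\mu,P) + \dKS(\nu,Q)
\]
gives the claimed $\eps$-bound with probability at least $1-\delta$.

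For the runtime, since $d$ is constant the $O(1)$-per-sample assumption gives sampling cost $O(n) = O(1/\eps^2 + \log(1/\delta))$. Applying Lemma \ref{lem:fast-d} then costs $O(d n \log n + (d/\eps)^d)$; folding $d$ into hidden constants and substituting the value of $n$, this is $O((1/\eps^2 + \log(1/\delta))\log(1/\eps + \log(1/\delta)) + 1/\eps^d)$, matching the stated bound (note $\log(1/\eps^2 + \log(1/\delta)) = \Theta(\log(1/\eps + \log(1/\delta)))$).

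The argument is essentially bookkeeping, so there is no serious technical obstacle. The only points to be careful about are (i) invoking the metric (not merely pseudometric) property of $\dKS$ to justify the triangle inequality as written, and (ii) confirming that splitting $\eps$ and $\delta$ into a constant number of equal shares perturbs only the hidden constants in both Theorem \ref{thm:SC-Rd} and Lemma \ref{lem:fast-d}. Neither requires any delicate calculation.
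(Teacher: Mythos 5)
Your proposal matches the paper's own (very brief) justification: the paper proves Theorem \ref{thm:dKS-d} by exactly this two-stage scheme --- draw $n = O((1/\eps^2)(d+\log(1/\delta)))$ samples per distribution via Theorem \ref{thm:SC-Rd}, then run the grid algorithm of Lemma \ref{lem:fast-d} --- and your error-splitting and runtime bookkeeping fill in the details correctly. One small note: the triangle inequality you use is already guaranteed by the pseudometric property of any IPM, so you do not need the strong identity (full metric) property of $\dKS$ for this step.
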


When restricting to $d=2$, one can save a $\log \log(1/\delta)$ by only creating a sample large enough for constant probability of success (say $1/2$), then repeating the above procedure $2\log(1/\delta)$ times, and returning the median of all results.
Ultimately, $\dKS$ in $d=2$ has the same CSR complexity as in the traditional $d=1$ case.  

\begin{corollary}
\label{cor:dKS-d=2}
Consider two distributions $\mu,\nu$ with support on $\R^2$, and two parameters $\eps, \delta > 0$.  With probability at least $1-\delta$, we can compute a value $\hat v$ so that $|\hat v - \dKS(\mu, \nu)| \leq \eps$ in $O((1/\eps^2)\log(1/\eps) \log(1/\delta))$ time.  
\end{corollary}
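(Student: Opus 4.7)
The goal is to shave a factor of roughly $\log\log(1/\delta)$ and reorganize the bound of Theorem \ref{thm:dKS-d} (with $d=2$) by replacing direct probability amplification inside the sample-complexity bound with an external median-boosting step. The plan is as follows. First, invoke Theorem \ref{thm:dKS-d} with $d=2$ and a fixed constant failure probability $\delta_0 = 1/3$. For this constant $\delta_0$, every occurrence of $\log(1/\delta_0)$ and $\log\log(1/\delta_0)$ collapses to $O(1)$, and the stated runtime becomes $O((1/\eps^2)\log(1/\eps) + 1/\eps^2) = O((1/\eps^2)\log(1/\eps))$. Call this the \emph{base procedure}; it returns a value $\hat v_i$ satisfying $|\hat v_i - \dKS(\mu,\nu)| \le \eps$ with probability at least $2/3$.

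Next, run the base procedure $k = \lceil c \log(1/\delta) \rceil$ independent times, for a sufficiently large constant $c$, obtaining estimates $\hat v_1, \ldots, \hat v_k$, and output their median $\hat v = \mathrm{median}(\hat v_1, \ldots, \hat v_k)$. Because each run of the base procedure draws its own fresh iid samples (per the two-stage construction underlying Theorem \ref{thm:dKS-d}), the success indicators $X_i = \one[\,|\hat v_i - \dKS(\mu,\nu)| > \eps\,]$ are independent Bernoullis with $\Pr[X_i = 1] \le 1/3$. A standard Chernoff bound then gives
\[
\Pr\!\left[\textstyle\sum_{i=1}^k X_i \ge k/2\right] \le \exp(-\Omega(k)) \le \delta
\]
for $c$ chosen large enough.

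The final step is a short deterministic argument that turns a majority of accurate runs into an accurate median: if strictly more than $k/2$ of the $\hat v_i$ lie in the interval $[\dKS(\mu,\nu)-\eps,\,\dKS(\mu,\nu)+\eps]$, then the median must lie in the same interval. Indeed, if the median were below $\dKS(\mu,\nu)-\eps$, then at least $\lceil k/2 \rceil$ of the estimates would be $\le$ median $< \dKS(\mu,\nu)-\eps$, contradicting that more than half lie in the interval; the symmetric case is identical. Combining this with the Chernoff bound yields $|\hat v - \dKS(\mu,\nu)| \le \eps$ with probability at least $1 - \delta$.

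For the runtime, the $k$ independent base calls cost $O(k \cdot (1/\eps^2)\log(1/\eps)) = O((1/\eps^2)\log(1/\eps)\log(1/\delta))$, and computing the median of $k$ numbers takes $O(k) = O(\log(1/\delta))$, which is dominated. There is no real obstacle; the argument is entirely a standard median-of-independent-runs boosting applied on top of Theorem \ref{thm:dKS-d}, and the only thing to check carefully is that the constant-probability specialization of Theorem \ref{thm:dKS-d} in dimension $d=2$ indeed runs in $O((1/\eps^2)\log(1/\eps))$, which follows by simplifying the stated bound with $\delta$ a constant.
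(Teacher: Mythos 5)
Your proposal is correct and is essentially the same argument the paper gives: the paper likewise runs the $d=2$ algorithm of Theorem \ref{thm:dKS-d} at a constant success probability (so the per-run cost is $O((1/\eps^2)\log(1/\eps))$), repeats it $O(\log(1/\delta))$ times, and returns the median. Your write-up just spells out the Chernoff/median details that the paper leaves implicit.
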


\subsection{Near-Linear Time in 3 and 4 Dimensions}
\label{sec:near-lin34}

Invoking Theorem \ref{thm:dKS-d} with $d=3$ requires $\Theta(1/\eps^3)$ time which is larger than the sample complexity of $n=O((1/\eps^2)\log(1/\delta))$.  Can we reduce the runtime to $\eps$-approximate $\dKS(\mu,\nu)$ back to be near-linear in the sample complexity?  
Through a more complicated approach, we show here that we indeed can, even for $d=4$.

First we sample $X \sim \mu$ and $Y \sim \nu$, each of size $n = O((1/\eps^2)\log(1/\delta))$.  
Then we will invoke algorithms implied by Lemma \ref{lem:fast-eps-Td} with constant probability of success (specifically with $\delta=1/2$), and later will obtain full probabilistic bounds using the median trick; see Algorithm \ref{alg:dks-Klee}.  For brevity, we omit probabilistic statements until then.  In particular, we
create a set $P$ and $Q$ (from $X$ and $Y$, respectively) of size $s = O((1/\eps) \log^{7}(1/\eps))$ so that for all $R \in \mathcal{T}_d$ that $\left| \frac{|X \cap R|}{|X|} - \frac{|P \cap R|}{|P|} \right| \leq \eps$; and similar for $Y,Q$.  This takes $O((1/\eps^2) \log^3(1/\eps))$ time.  
For $d=4$ this has the same guarantees using $s_4 = O((1/\eps) \log^9(1/\eps))$ and in time $O((1/\eps^2) \log^4(1/\eps))$.  
We now operate just on $P,Q$ and approximate their distance within $\eps$ error.  This results in a total of $3 \eps$ error, but can be adjusted by setting $\eps' = \eps/3$ and achieving overall error $\eps'$ at each step with the same asymptotics as the $\eps$ error we analyze.

Next, we convert this problem of computing $\dKS(P,Q)$ into another computational geometry problem about a set of $s$ rectangles in $\R^d$, among which we want to find a point $p \in \R^d$ of maximum depth: the point $p$ that is contained in most rectangles.  This problem has a long history and is deeply related to the so-called Klee's measure problem (c.f. \cite{chan2013klee}).  Notably, it can be solved in roughly $s^{d/2}$ time, which will provide asymptotically-near-optimal approximate CSR algorithms for $d=3,4$.  The specific result we need involves the $s$ rectangles in $\R^d$, each with a weight, and the goal is to find a point $z \in \R^d$ that maximizes the sum of the weights of the rectangles that contain it.  Chan~\cite{chan2013klee} states the results with a parameter $b \geq 1$, and shows that this can be solved in $O(s^{d/2} b^{5-d/2} \log b + b^{O(b)})$ time for any constant $d \geq 3$.  Setting $b = O(\log s / \log \log s)$ yields the following result.  

\begin{lemma}[Chan~\cite{chan2013klee}]
\label{lem:max-rect}
Given a set of $s$ rectangles in $\R^d$ with weights, in $O(s^{d/2} \log^{5-d/2} s \cdot (\log \log s)^{O(1)})$ time we can find a point $z$ which maximizes the sum of those weights, and also return that maximal weight.  
\end{lemma}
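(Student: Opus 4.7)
Since the bound $O(t^{d/2} b^{5-d/2} \log b + b^{O(b)})$ is already established in Chan's work on Klee's measure problem, the plan is to verify that the specific choice $b = \Theta(\log t / \log \log t)$ reduces this expression to the stated bound. The argument is essentially a parameter-tuning calculation on top of Chan's algorithm.

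First I would handle the additive $b^{O(b)}$ term. Taking logarithms, $O(b)\log b = O\left(\frac{\log t}{\log\log t}\right) \cdot \log\log t = O(\log t)$, so $b^{O(b)} = t^{O(1)}$. By choosing the hidden constant in $b = c \log t/\log\log t$ sufficiently small, one can make the exponent in $t^{O(1)}$ at most $d/2$, so this term is absorbed into $O(t^{d/2})$ for every $d \geq 2$ (for $d = 1$, a direct sweep over the $2t$ endpoint events suffices, so the lemma is trivial there).

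Next I would substitute $b = \Theta(\log t/\log\log t)$ into the first term. Because $\log b = \Theta(\log\log t)$ and $b^{5-d/2} = \Theta\!\left((\log t)^{5-d/2}/(\log\log t)^{5-d/2}\right)$, multiplying these with $t^{d/2}$ gives $t^{d/2} \log^{5-d/2} t$ times a $(\log\log t)^{O(1)}$ factor, where the exponent of $\log\log t$ may be positive or negative depending on $d$ but is bounded by a constant in either case. This matches the lemma's form.

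The only genuine subtlety — though not really an obstacle — is confirming that Chan's divide-and-conquer can be made to return not just the maximum weighted depth but also a witness point $z \in \R^d$ achieving it. Each recursive subproblem in Chan's framework can be tagged with the coordinates of its local maximizer (obtained from the slab boundary events and the lower-dimensional recursive call), and witnesses are composed up the recursion tree with only constant overhead per level. Hence the asymptotic running time is preserved, and no new algorithmic ideas beyond Chan's result are required.
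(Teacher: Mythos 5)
Your proposal is correct and follows exactly the route the paper takes: the paper's entire justification is the sentence preceding the lemma, namely quoting Chan's parameterized bound $O(t^{d/2} b^{5-d/2} \log b + b^{O(b)})$ and setting $b = \Theta(\log t/\log\log t)$, and your substitution (absorbing $b^{O(b)} = t^{O(1)}$ via a small constant, and noting $\log b = \Theta(\log\log t)$) fills in the same arithmetic the paper leaves implicit. Your extra remark about recovering a witness point $z$ is a reasonable detail the paper does not address, but it does not change the argument.
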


For $d=3$ the runtime is $O(s^{1.5} \log^{3.5} s (\log \log s)^{O(1)})$, and for $d=4$ the runtime is $O(s^2 \log^3 s (\log \log s)^{O(1)})$.

\begin{figure}
\centering
    \includegraphics[width=0.5\textwidth]{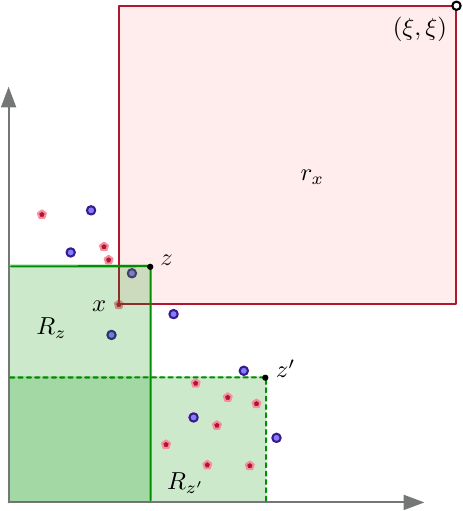}
    \caption{Mapping to Klee's problem in $d=2$: $x \in R_z$ (green) iff $z \in r_x$ (red rectangle)}
    \label{fig:Klee-dual}
\end{figure}

The reduction is now fairly straightforward.  
Let $\xi = 2\max_{j \in [1,d]} \max_{x \in P \cup Q} x[j]$, where $x[j]$ is the $j$th coordinate of $x$.  We map each $x \in P \cup Q$ to a rectangle $r_x = \{z \in \R^d \mid x[j] \leq z[j] \leq \xi \textrm{ s.t. } j \in [1,d]\}$.  Now $R_z$ includes $x$ (in the primal) if and only if $r_x$ contains $z$ (in the dual); illustrated for $d=2$ in Figure \ref{fig:Klee-dual}.  
For each $x \in P$ we set the weight to $+1/|P|$, and for $x \in Q$ at $-1/|Q|$; then for a point $z \in \R^d$, the total weight of the rectangles it contains is the difference in the fraction of points $R_z$ contains from $P$ versus $Q$.  That is, with $P,Q$ found via Lemma \ref{lem:fast-eps-Td}, then the maximum value among all $z$ is an $\eps$-approximation of $\dKS(\mu, \nu)$.  

Finally, to understand the runtime of approximating $\dKS$ within $\eps$, we can invoke Lemma \ref{lem:max-rect} with $s = |P \cup Q| = O((1/\eps) \log^{2d+1} (1/\eps))$.  Recall that we set the initial sample of $X,Y$ to size $n=O(d/\eps^2)$, so it succeeds with constant probability in achieving our $\eps$-approximation property.  Then we reduced the size $s = |P \cup Q|$ using Lemma \ref{lem:fast-eps-Td}.  We repeat the entire endeavor (sample $n$ points, compress to $s$, apply Lemma \ref{lem:max-rect} in the dual) $2\log 1/\delta$ times, and return the median answer; which has error at most $\eps$ with probability at least $1-\delta$.  This is sketched in Algorithm \ref{alg:dks-Klee}.  

\begin{theorem}\label{thm:d}
Consider two distributions $\mu,\nu$ with support on $\R^d$ (for constant $d$), and two parameters $\eps, \delta \in (0,1]$.  With probability at least $1-\delta$, we can compute a value $\hat v$ such that $|\hat v - \dKS(\mu, \nu)| \leq \eps$ in 
$O(((1/\eps^2)\log^d (1/\eps) + (1/\eps)^{d/2}\log^{d^2+5}(1/\eps)(\log \log (1/\eps))^{O(1)})\log(1/\delta))$ time.  
\end{theorem}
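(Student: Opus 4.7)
The plan is to assemble four ingredients already set up in the excerpt: an initial random sample from each of $\mu$ and $\nu$, the discrepancy-based $\eps$-approximation of Lemma \ref{lem:fast-eps-Td}, the point-to-rectangle duality sketched in Figure \ref{fig:Klee-dual}, and Chan's weighted maximum-depth algorithm (Lemma \ref{lem:max-rect}), boosted by a standard median-of-repetitions step for the $\delta$ dependence.

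First I would run the entire pipeline with parameters chosen for constant success probability (say $1/2$), then repeat $O(\log(1/\delta))$ times and output the median, which drives the failure probability below $\delta$ and introduces only the $\log(1/\delta)$ factor appearing in the bound. Within one pass, set $\eps' = \eps/3$ and draw $X \sim \mu$, $Y \sim \nu$ of size $n = O(1/\eps'^2)$; by Theorem \ref{thm:SC-Rd} each is, with constant probability, an $\eps'$-sample for $(\mu, \RR_d)$ and $(\nu, \RR_d)$. Then feed $X$ and $Y$ into Lemma \ref{lem:fast-eps-Td} for the range space $(X, \TT_d)$ (and similarly for $Y$); since $\RR_d \subset \TT_d$, the returned sets $P$ and $Q$ of size $s = O((1/\eps')\log^{2d+1}(1/\eps'))$ are also $\eps'$-approximations for dominating rectangles. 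Three triangle-style steps therefore give $\sup_{z}||P \cap R_z|/|P| - |Q \cap R_z|/|Q|| - \dKS(\mu,\nu)| \leq 3\eps' = \eps$, so it suffices to approximate the discrete $\dKS(P,Q)$.

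Next I would execute the duality verbatim from Section \ref{ssec:algo}: map each $x \in P \cup Q$ to the rectangle $r_x = \{z : x[j] \leq z[j] \leq \xi,\ j \in [1,d]\}$ with $\xi = 2\max_{j,x}|x[j]|$, and assign weight $+1/|P|$ or $-1/|Q|$ according to membership. Because $x \in R_z \iff z \in r_x$, the signed depth at a query point $z$ equals $|P \cap R_z|/|P| - |Q \cap R_z|/|Q|$. Running Lemma \ref{lem:max-rect} with $t = |P| + |Q| = O(s)$ maximizes this signed quantity; rerunning with all weights negated handles the other direction of the absolute value, and the larger of the two outputs is the desired $\hat v$. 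Correctness follows directly from these equivalences; the only bookkeeping worth being careful about is that the three error sources compose additively because each is a uniform CDF-type bound on the common family $\TT_d$, which I view as the main place a routine argument could be miswritten rather than a real obstacle.

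Finally, the runtime is a direct arithmetic check. Constructing the $\eps'$-samples costs $O((1/\eps^2)\log^d(1/\eps))$ by Lemma \ref{lem:fast-eps-Td} applied to the initial $O(1/\eps^2)$-size random samples. Plugging $t = O((1/\eps)\log^{2d+1}(1/\eps))$ into Lemma \ref{lem:max-rect} gives $t^{d/2}\log^{5-d/2} t \cdot (\log\log t)^{O(1)} = O((1/\eps)^{d/2}\log^{\alpha}(1/\eps)(\log\log(1/\eps))^{O(1)})$ with exponent $\alpha = d(2d+1)/2 + (5 - d/2) = d^2 + 5$. Summing these two terms and multiplying by the $O(\log(1/\delta))$ repetitions yields exactly the quoted bound $O(((1/\eps^2)\log^d(1/\eps) + (1/\eps)^{d/2}\log^{d^2+5}(1/\eps)(\log\log(1/\eps))^{O(1)})\log(1/\delta))$.
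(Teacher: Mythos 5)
Your proposal matches the paper's own proof essentially step for step: constant-probability sampling followed by median-of-$O(\log(1/\delta))$ repetitions, reduction to small $\eps'$-approximations via Lemma \ref{lem:fast-eps-Td}, the point-to-rectangle duality, and Chan's weighted maximum-depth algorithm with $t = O((1/\eps)\log^{2d+1}(1/\eps))$, including the same exponent arithmetic $d(2d+1)/2 + (5-d/2) = d^2+5$. The only (welcome) addition is your explicit note about rerunning with negated weights to handle both signs of the absolute value, a detail the paper leaves implicit.
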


\begin{algorithm}[t]
\caption{approx-$\dKS(\mu,\nu)$; \;\;\;  for $\mu,\nu$ defined on $\R^d$ with $d=3,4$, and $\eps,\delta \in (0,1)$}
\label{alg:dks-Klee}
\begin{algorithmic}[1]

\For{$j = 1$ to $2\log(1/\delta)$}

    \State Sample $X \sim \mu$ and $Y \sim \nu$ as i.i.d. samples of size $n = O(1/\eps^2)$ 

    \State 
    Reduce $X,Y$ to $P,Q$ using Lemma~\ref{lem:fast-eps-Td};  
    each of size $s = O((1/\eps) \mathrm{poly}\log(1/\eps))$ 

    \State Convert each point $x \in P \cup Q$ to a rectangle $r_x$ in dual %

    \State Run Chan's Algo (Lemma \ref{lem:max-rect}), to find a max weight point $z \in \R^d$ 

    \State Estimate $\hat v_j = \left | \frac{|P \cap R_z|}{s} - \frac{|Q \cap R_z|}{s} \right|$ for rectangle $R_z$ in primal. %

\EndFor

\State \textbf{return} $\hat v = \textsf{median}(\{\hat v_1, \hat v_2, \ldots \})$
\Comment{Return $(\varepsilon,\delta)$-approx of $\dKS(\mu,\nu)$}

\end{algorithmic}
\end{algorithm}

The runtime comes from $O((1/\eps^2) \log^d(1/\eps))$ to invoke Lemma \ref{lem:fast-eps-Td}.  
And then another
\begin{equation*}
\begin{split}
O(s^{d/2} \log^{5-d/2} s\,(\log\log s)^{O(1)})
&= O\!\Big((1/\eps)^{d/2}\, \log^{(2d+1)d/2}(1/\eps) \\
&\qquad\quad \log^{5-d/2}(1/\eps)\,(\log\log(1/\eps))^{O(1)}\Big)
\end{split}
\end{equation*}
to invoke Lemma \ref{lem:max-rect}. We can state the following corollaries for $d=3$ and $d=4$, which achieve a computational-statistical runtime near-linear in the sample complexity $O((1/\eps^2)\log(1/\delta))$.  

\begin{corollary}
\label{cor:dKS-d=3}
Consider two distributions $\mu,\nu$ with support on $\R^3$, and two parameters $\eps, \delta \in (0,1]$.  With probability at least $1-\delta$, we can compute a value $\hat v$ so that $|\hat v - \dKS(\mu, \nu)| \leq \eps$ in 
$O((1/\eps^2)\log^3 (1/\eps) \log(1/\delta))$ time.  
\end{corollary}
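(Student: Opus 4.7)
The plan is to simply specialize Theorem \ref{thm:d} to $d=3$ and then verify that the first of the two summands in the runtime dominates. Substituting $d=3$ into the expression
\[
O\!\left(\left(\tfrac{1}{\eps^2}\log^{d}(1/\eps) + \left(\tfrac{1}{\eps}\right)^{d/2}\log^{d^2+5}(1/\eps)(\log\log(1/\eps))^{O(1)}\right)\log(1/\delta)\right)
\]
yields a first term of $O((1/\eps^2)\log^3(1/\eps)\log(1/\delta))$, which is exactly the bound we claim, and a second term of $O((1/\eps)^{3/2}\log^{14}(1/\eps)(\log\log(1/\eps))^{O(1)}\log(1/\delta))$ coming from the Klee-style depth computation of Lemma \ref{lem:max-rect} applied to $t = |P\cup Q| = O((1/\eps)\log^{7}(1/\eps))$.

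Next I would show that the first term asymptotically dominates the second. The ratio of the first to the second is
\[
\frac{(1/\eps^2)\log^3(1/\eps)}{(1/\eps)^{3/2}\log^{14}(1/\eps)(\log\log(1/\eps))^{O(1)}} \;=\; \frac{(1/\eps)^{1/2}}{\log^{11}(1/\eps)(\log\log(1/\eps))^{O(1)}},
\]
and since any fixed positive power of $1/\eps$ eventually exceeds any polylog in $1/\eps$, this ratio tends to infinity as $\eps \to 0$. So for all sufficiently small $\eps$ (and absorbing any constant-sized regime into the big-$O$), the first summand dominates and the overall running time collapses to $O((1/\eps^2)\log^3(1/\eps)\log(1/\delta))$.

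The correctness guarantee, that $|\hat v - \dKS(\mu,\nu)| \leq \eps$ with probability at least $1-\delta$, is inherited verbatim from Theorem \ref{thm:d}: one samples $X\sim\mu$ and $Y\sim\nu$ of size $O((1/\eps^2))$, computes $\eps$-samples $P,Q$ via Lemma \ref{lem:fast-eps-Td} in $O((1/\eps^2)\log^3(1/\eps))$ time, reduces to maximum-weighted-depth among $O((1/\eps)\log^{7}(1/\eps))$ axis-aligned boxes in $\R^3$, and invokes Lemma \ref{lem:max-rect}. A standard median-of-$O(\log(1/\delta))$-runs boosting argument upgrades constant success probability to $1-\delta$, contributing the $\log(1/\delta)$ factor. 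Since this is an entirely mechanical specialization, there is no real obstacle to overcome; the only substantive step is the polynomial-versus-polylogarithmic comparison above, and it goes through cleanly because $d/2 = 3/2 < 2$ for $d=3$.
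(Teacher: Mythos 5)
Your proposal is correct and follows exactly the paper's (implicit) argument: Corollary~\ref{cor:dKS-d=3} is obtained by substituting $d=3$ into Theorem~\ref{thm:d} and observing that the $(1/\eps)^{3/2}\log^{14}(1/\eps)(\log\log(1/\eps))^{O(1)}$ term from the Klee-type depth computation is asymptotically dominated by the $(1/\eps^2)\log^3(1/\eps)$ term from Lemma~\ref{lem:fast-eps-Td}. The paper states this specialization without spelling out the domination ratio, which you verify explicitly; that is a harmless (and slightly more careful) addition.
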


\begin{corollary}
\label{cor:dKS-d=4}
Consider two distributions $\mu,\nu$ with support on $\R^4$, and two parameters $\eps, \delta \in (0,1]$.  With probability at least $1-\delta$, we can compute a value $\hat v$ so that $|\hat v - \dKS(\mu, \nu)| \leq \eps$ in 
$O((1/\eps^2)\log^{21} (1/\eps)(\log^{O(1)}\log(1/\eps)) \log(1/\delta))$ time.  
\end{corollary}

\subsection{Hardness of Improving Runtime}
There is evidence that this runtime cannot be significantly improved by reducing to a well-studied problem.  
Specifically, \textsc{Max-Weight $k$-Clique} is a problem on $k$-partite graphs with $k$ disjoint vertex sets of size $n$ where edges must have endpoints in disjoint sets. The edges are assigned weights, and the goal is to find one vertex in each set (a $k$-clique) so summing over the induced edges has maximum weight.  Despite much effort, there are no algorithms for this problem which improve polynomially on an $O(n^k)$ runtime.  The fine-grained complexity project has built small-polynomial-time reductions between numerous problems, so a polynomial improvement in one would imply a polynomial improvement for many others.  

Backurs \etal~\cite{backurs2016tight} show that if one can achieve a $O(n^{d-\alpha})$ runtime for any $\alpha > 0$ to find the rectangle $R \in \mathcal{T}_d$ which maximizes the difference in total weight included between two size-$n$ weighted point sets $P,Q \subset \R^d$, then one can solve the \textsc{Max-Weight $k$-Clique} problem in $O(n^{k-\alpha})$ time for $k = \lceil d^2/\alpha \rceil$ -- which is assumed hard.  
Matheny and Phillips~\cite{matheny2018computing} observe that if one can solve the $\eps$-approximate version in $O(n + 1/\eps^{d-\alpha})$ time (for any $\alpha > 0$) then setting $\eps$ sufficiently small (they set $\eps = 1/(4n)$), one can also solve the exact version.  So this implies conditional hardness for the challenge of finding the maximum difference rectangle from $\mathcal{T}_d$ in faster than $O(n + 1/\eps^d)$ considering any $\eps > 0$.  

To apply this to the dKS problem, we need to address two issues.  First, we can only set $\eps \approx 1/\sqrt{n}$, and second convert from $\mathcal{T}_d$ (with $2d$ sides in $\R^d$) to $\mathcal{R}_d$ (with $d$ sides in $\R^d$).

\begin{figure}
    \centering
    \includegraphics[width=0.5\textwidth]{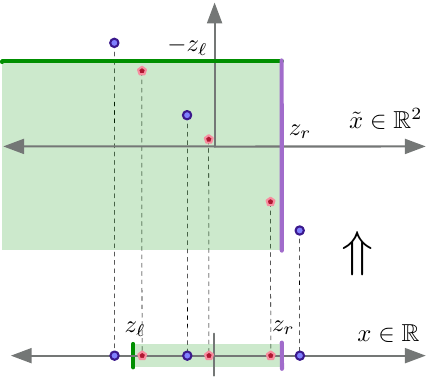}
    \caption{Lifting from intervals (1-d rectangles) $\TT$ in $\R$ to dominating rectangles $\RR$ in $\R^2$.}  %
    \label{fig:lift-2d}
\end{figure}

To address the second issue, we can map the full rectangle $\mathcal{T}_d$ problem in $\R^d$ into a dominating rectangle $\mathcal{R}_{2d}$ problem in $\R^{2d}$.  For each point $x = (x_1, x_2, \ldots x_d) \in \R^d$ map to a point $\tilde x \in \R^{2d}$ where $\tilde x_i = x_i$ for $i \in 1 \ldots d$ and $\tilde x_i = -x_{i-d}$ for $i \in d+1 \ldots 2d$.  Then $x \in R_{z,z'}$ iff $\tilde x \in R_{\tilde z}$ where $\tilde z \in \R^{2d}$ is the concatenation of $z$ and $-z'$.  One can see this in $d=1$ (in Figure \ref{fig:lift-2d}), where the mapped points $\tilde x$ in $\R^2$ all lie on the $x_1 = -x_2$ line through the origin, and $2$-sided dominating rectangles $R_{(z_r, -z_\ell)}$ in $\R^2$ can only form the same subsets as intervals $[z_\ell, z_r]$ in $d=1$.  Repeating this argument for each of the original $d$ coordinates can complete the reduction.  So following Backurs \etal~\cite{backurs2016tight}, if we can solve the exact $\dKS(P,Q)$ on point sets $P,Q \in \R^{2d}$ each of size $n$ in time $O(n^{d-\alpha})$, then we can solve the \textsc{Max-Weight-$k$-Clique} problem in $O(n^{k-\alpha})$ for $k = \lceil d^2/(4\alpha) \rceil$.

To address the first issue, we can consider any exact problem on $m$ points $P,Q$ in $\R^d$ where we wish to find the rectangle $R \in \TT_d$ that maximizes the difference between $P$ and $Q$.  We convert this into a problem on $n = c m^2$ points, for some constant $c$, where we seek to solve with $\eps = 1/(4\sqrt{cn}) = 1/4m$ error.  The conversion simply replaces each of the $m$ points in the original problem with $cm$ copies of the same point to form the size-$n$ problem, so any rectangle $R \in \TT_d$ that contains one such point must contain all of them in the set of $cm$ copies.  Now to achieve $\eps$-error on the size-$n$ problem, we must solve the size-$m$ problem exactly.  This overall construction leads to the following theorem.

\begin{theorem}\label{thm:hardness}
For two point sets $P,Q \in \R^d$ of size $n$ if we can find a $\hat v$ so that $|\hat v - \dKS(P,Q)| \leq \eps$ for $\eps = O(1/\sqrt{n})$ in time $O(n^{d/4-\alpha})$ for $\alpha > 0$, then we can solve \textsc{Max-Weight-$k$-Clique} on $m = \Omega(\sqrt{n})$ points in $O(m^{k-\alpha})$ time for $k = \lceil d^2/(16\alpha) \rceil$.  
\end{theorem}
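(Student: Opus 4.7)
The plan is to chain three reductions: a point-duplication step that promotes $\eps$-approximate $\dKS$ into an exact computation, the dimension-halving lift that converts rectangles $\mathcal{T}_{d/2}$ in $\R^{d/2}$ into dominating rectangles $\mathcal{R}_d$ in $\R^d$, and the conditional hardness reduction of Backurs \etal~\cite{backurs2016tight} from exact $\mathcal{T}$-discrepancy to \textsc{Max-Weight-$k$-Clique}.

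First I would set up the duplication. Starting from an exact $\dKS$ problem on two point sets of size $m$ in $\R^d$, build a new instance of size $n = c m^2$ (for a constant $c$ to be chosen) by replacing each point by $c m$ identical copies.  Any range $R \in \mathcal{R}_d$ contains either all or none of the copies of a given input point, so the $n$-point discrepancies are exactly the $m$-point discrepancies, and two distinct candidate discrepancies differ by an integer multiple of $1/m$.  Hence an $\eps$-approximation with $\eps \le 1/(4m)$ pins down the exact maximizer.  Since $\eps = O(1/\sqrt{n}) = O(1/m)$, the theorem's hypothesis supplies such an approximation in time $O(n^{d/4-\alpha}) = O(m^{d/2 - 2\alpha})$, which is therefore also the time to solve exact $\dKS$ on the original $m$ points.

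Next I would apply the lift sketched in the excerpt (Figure~\ref{fig:lift-2d}): map each $x \in \R^{d/2}$ to $\tilde x = (x, -x) \in \R^d$; then a $\mathcal{T}_{d/2}$-range $R_{z,z'}$ in $\R^{d/2}$ contains $x$ iff the dominating rectangle $R_{(z,-z')} \in \mathcal{R}_d$ contains $\tilde x$.  This packages any exact max-$\mathcal{T}_{d/2}$-discrepancy instance on $m$ points in $\R^{d/2}$ as an exact $\dKS$ instance in $\R^d$ of the same size, so the algorithm of the previous step solves it in $O(m^{d/2 - 2\alpha})$ time.  Finally, invoking Backurs \etal~\cite{backurs2016tight} --- an $O(m^{d'-\beta})$ algorithm for exact max-$\mathcal{T}_{d'}$-difference implies an $O(m^{k-\beta})$ algorithm for \textsc{Max-Weight-$k$-Clique} with $k = \lceil (d')^2/\beta \rceil$ --- with $d' = d/2$ and $\beta = 2\alpha$ yields the stated conclusion, up to bookkeeping of the factors of $2,4$ that accumulate through the rescalings (and that account for the safety constant $16$ in the theorem as written).

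The main obstacle, and the step most worth writing out carefully, is the duplication: I must check that the discrete set of candidate discrepancies attained in the $m$-point problem really is separated by $\Omega(1/m)$ in the duplicated instance, so that an approximation of additive error $O(1/\sqrt{n})$ truly forces the algorithm to return the exact maximum rather than a close competitor.  This is where the free choice of the duplication multiplier $c$ and the precise constant absorbed in $\eps = O(1/\sqrt{n})$ interact; the other two steps are essentially black-box invocations of reductions whose combinatorial content was already established in the excerpt.
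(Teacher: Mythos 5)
Your proposal is correct and follows essentially the same route as the paper: the point-duplication step to force exactness from an $O(1/\sqrt{n})$-additive approximation, the $(x,-x)$ lift turning $\mathcal{T}_{d/2}$-ranges in $\R^{d/2}$ into dominating rectangles in $\R^d$, and the final invocation of Backurs \etal, with the same accounting that gives exact $\dKS$ on $m$ points in $O(m^{d/2-2\alpha})$ time. Your observation that the attained discrepancies are integer multiples of $1/m$ (so additive error below $1/(4m)$ pins down the exact optimum) is precisely the paper's justification as well, and your remark that the constant $16$ absorbs the accumulated factors matches the paper's (slightly loose) bookkeeping.
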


Hence, if the goal is to solve for an $\eps$-approximation $\hat v$ to the $\dKS(P,Q)$ problem in $O(n ~\mathrm{polylog}(n))$ time in $\R^d$ for $\eps = O(1/\sqrt{n})$, it is unlikely it can be accomplished in time near-linear in $n$ when $d > 4$, unless there is a major breakthrough in the \textsc{Max-Weight-$k$-Clique} problem.  In particular, conditioned on this assumption, up to sublinear factors, we have established the computational-statistical runtime of $\tilde\Theta(1/\eps^2 + 1/\eps^{d/4})$ to $\eps$-approximate $\dKS$ in $\R^d$.

\section{Instability of Alternative Multi-Dimensional Kolmogorov-Smirnov Distance Formulations}
\label{sec:instability}

Here we discuss how common heuristic extensions of the KS distance to higher dimensions~\cite{peacock1983two,fasano1987multidimensional,press1988kolmogorov,hagen2021accelerated,lopes2007two,xiao2017fast} -- in particular $\mdKS$, which only considers subsets where the range corner must be a data point -- results in an estimate unstable with respect to the sampled data.  In contrast, for $\dKS$ we prove the sample complexity formalized in Theorem \ref{thm:SC-Rd}.  An implication of the instability demonstrated in the constructions that follow is that such a general result is not possible for $\mdKS$.  

Consider two point sets $P,Q \subset \R^2$ that lie very close to the line $x=y$ (with some noise off that line); it is rare that the second or fourth quadrant defined with a data point (from $P \cup Q$) in the corner contains more than 1 point (or say a constant number of points).  
Thus, the points effectively lie on a 1-dimensional subspace, and only the first and third quadrants matter.  Then in the middle of this sequence, consider a range that contains an $\alpha$-fraction of both $P$ and of $Q$ (think of $\alpha = 1/2$).  However, this interval is structured so we first see an $\alpha/2$-fraction of points from $Q$, then an $\alpha$-fraction from $P$, and then an $\alpha/2$-fraction from $Q$ again; see Figure \ref{fig:hard-cases}(left).  For any other interval, the distributions are basically balanced.  The largest quadrant-defined subset among these points has a KS distance of at most $\alpha/2$: containing just one $
\alpha/2$-fraction section of $Q$ (e.g., the orange range in Figure \ref{fig:hard-cases}(left)) or that and the $\alpha$-fraction section of $P$.  
However, this is unstable to the sampling of points: consider a single additional point $p^\dagger$ which is off of this $x=y$ line so that its second quadrant contains exactly the $\alpha$-fraction section of $P$ (e.g., the green range in Figure \ref{fig:hard-cases}(left)).  Now the KS distance is $\alpha$, a factor of $2$ increase from before.  

But should it matter if $p^\dagger$ is in the point set or not?  Consider a distribution where the probability of sampling a point at $p^\dagger$ is $1/(2n)$.  Then if we sample $n$ points, with probability roughly $1/2$ the point at $p^\dagger$ is included (and $\KS(P,Q) \approx \alpha$) and probability roughly $1/2$ no such point $p^\dagger$ is included (and $\KS(P,Q) \approx \alpha/2$).  

If we only consider a single quadrant direction, the contrast is even higher.  By symmetry consider the second quadrant, and then either no quadrant contains more than a constant number of points (so $\KS(P,Q) \approx 1/n$) or if $p^\dagger$ exists the offending $\alpha$-fraction is in a range (so $\KS(P,Q) = \alpha$).

\begin{figure}
\hfill
\includegraphics[width=0.4\textwidth]{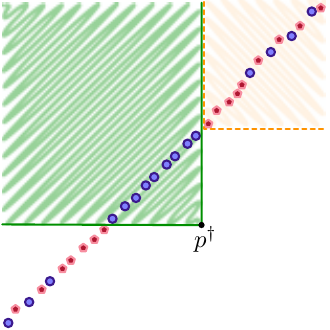}   
\hfill
\includegraphics[width=0.4\textwidth]{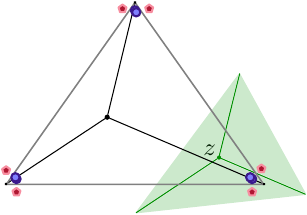}    
\hfill
\phantom{.}
\caption{Hard example in $d=2$ (left) and $d=3$ (right) between distributions blue $\circ$ and red $\star$.}
\label{fig:hard-cases}
\end{figure}

In higher-dimensions, even with evaluating all quadrants, this gap between including $p^\dagger$ or not becomes more extreme.  Now consider the base distribution containing points very near to the $(d-1)$-dimensional subspace where $\sum_{j=1}^d x_j = 1$; if we further restrict so that each $x_j \in [0,1]$, this is sometimes called the \emph{$(d-1)$-simplex}.  With base point $p^\dagger = 0$ (the origin), then its positive orthant contains exactly this $(d-1)$-simplex (so each $x_j \in [0,1]$) and nothing else.  Now we arrange our $\alpha$-fraction of $P$ so that it lies in this simplex, and specifically a $\alpha/d$-fraction near each vertex.  For the mass near vertex $e_j = (0,0,\ldots, 0,1,0,\ldots 0)$ where the $1$ is in the $j$th spot, the corresponding mass of $P$, for some small value $\eta$, lies at $(\frac{\eta}{d-1}, \frac{\eta}{d-1}, \ldots, \frac{\eta}{d-1}, 1-\eta, \frac{\eta}{d-1}, \ldots, \frac{\eta}{d-1})$.  
A corresponding $\alpha$-fraction of $Q$ is placed also near each vertex, but just outside the simplex.  For each corner, the mass of $Q$ is not placed in just one location, but is split among $d-1$ different directions each offset from the corner of the simplex along a different coordinate, and each containing a $\alpha/(d(d-1))$ fraction of the mass.  For vertex $e_j$, the $\ell$th offset part of $Q$ is at $(0, \ldots, 0, -\eta, 0, \ldots, 1+\eta, 0, \ldots, 0)$, where $1+\eta$ is in the $j$ coordinate, and the $-\eta$ is in the $\ell$th coordinate, and $j \neq \ell$.  This is illustrated in Figure \ref{fig:hard-cases}(right) for $d=3$.  
Now any axis-aligned query with a corner in the restricted plane, that contains the $(d-1)$-simplex and all data described so far, must contain at least $(d-2)$ of the $(d-1)$ corresponding offsets of $Q$.  However, centering at $p^\dagger$ (= the origin) can contain all of the $\alpha$-fraction of $P$, and none of the $\alpha$-fraction of $Q$.  The remainder $(1-\alpha)$ mass of $P$ and $Q$ can be paired with each other.  
Overall, this implies that the difference in what can be captured from a query with $p^\dagger$ and one restricted to the $\sum_j x_j=1$ set is $\alpha (d-1)/d$ and approaches $\alpha$ as $d \to \infty$.

While these pathological cases are presented through very precisely constructed cases, they illustrate two things.  First, $\mdKS$ cannot have generic sample complexity results as does $\dKS$ in Theorem \ref{thm:dKS-d}.  
Second, this removes blanket statements that $\mdKS$ can be applied regardless of distributions, since for these specific distributions it cannot provide consistent answers.

\section{$\dKS$ as an Integral Probability Metric}
\label{sec:IPM+RS}

In this section we make an argument that the proposed dKS formulation using only one-quadrant ranges ($\RR_d$, not all $2^d$ options) is perfectly reasonable.  As an implication, due to its efficiency and invariance to units along its axes, it should be a preferred choice in many contexts.   
To make this ``reasonableness'' claim, we show that $\dKS$ is from a well-studied distance family (whereas other variants are not).

An \emph{integral probability metric}~\cite{muller1997integral} (or IPM) is a family of distances between two distributions $\mu, \nu$ defined on a metric space $\XX$.  It is specified then by a family of real-valued functions $\FF$ on $\XX$.  Then the distance $\dgen_{\FF}(\mu, \nu)$ is defined
\[
\dgen_\FF(\mu,\nu) = \sup_{f \in \FF} |\mathsf{E}_{x \sim \mu}f(x) - \mathsf{E}_{x \sim \nu} f(x)|,
\]
where $\mathsf{E}_{x \sim \mu} f(x)$ is the expectation of the value $f(x)$ where $x$ is drawn from $\mu$.  

M\"uller showed that all IPMs are pseudometrics~\cite{muller1997integral}, and may be metrics depending on the choice of family $\FF$.  This means that $\dgen_\FF$ satisfies 
(a) symmetry $\dgen_\FF(\mu, \nu) = \dgen_\FF(\nu,\mu)$,
(b) triangle inequality $\dgen_\FF(\mu, \xi) + \dgen_\FF(\xi, \nu) \geq \dgen_\FF(\mu,\nu)$, and
(c) the weak version of identity that if $\mu = \nu$ then $\dgen_\FF(\mu,\nu) = 0$.  
IPMs can be metrics and satisfy the strong form of identity that also $\dgen_\FF(\mu,\nu) = 0$ implies that $\mu = \nu$.  This holds for other IPMs, such as total variation, where $\FF$ is the space of indicator functions on the Borel sets.  The kernel distance $\dgen_\KK$
(also known as maximum mean discrepancy~\cite{gretton2006kernel}) is defined with $\FF = \KK = \{\int_{q \in \R^d} K(\cdot, q) d \mu(q)\}$ over metric space $\XX = \R^d$.  When the family of kernels is ``characteristic''~\cite{sriperumbudur2011universality}, which includes the Gaussian and Laplace kernels, then $\dgen_\KK$ is also a metric.  
These functions could also more generally allow for a variety of other families of classifiers used in machine learning~\cite{lopez2017revisiting,kim2016classification}.

An important class of IPMs is when we define $\FF$ as an indicator function induced by a combinatorial range space $(\R^d, \RR)$.  Then $f(\cdot) = \one_R(\cdot)$ for range $R \in \RR$ and $\one_S$ is the indicator function which is $1$ iff the argument is in the set $S$, and $0$ otherwise. In this case for metric space $\XX = \R$, then the original KS distance is defined with $\dKS = \dgen_{\RR_1}$, where $I_z \in \RR_1$ is a one-sided interval $(-\infty, z]$.  Then it is known that $\dgen_{\RR_1} = \dKS$ is a metric.
However, applying ranges $\RR_1$ to only the first coordinate of $\R^d$ for $d>1$ does not lead to a metric. It is immediate that $\dKS$ is an IPM defined by the indicator functions on the sets in $\mathcal{R}_d$.

For $d \geq 1$, $\dKS$ is indeed a metric over the probability measures defined on $\mathcal{B}(\R^d)$, the Borel sets on $\mathbb{R}^{d}$; while this is well-known in probability theory, it is important and thus is formally stated below in Theorem \ref{thm:dKS-metric}. Proving this for distributions $\mu, \nu$ as discrete sets $P,Q$ (with uniform weight) has a relatively short and simple proof, by showing if $\dKS(P,Q) = 0$ then $P=Q$.  This holds by contrapositive: if $P \neq Q$, then let $z$ be a lexicographically minimal point where they are different; then $|P \cap R_z| \neq |Q \cap R_z|$ and so $\dKS(P,Q) \neq 0$.  
The proof for the probability measures $\mu,\nu$ defined on Borel sets is slightly more technical, and although it is known in probability theory, for completeness' sake it is included in Appendix \ref{app:metric}.

\begin{theorem}
    \label{thm:dKS-metric}
    On the collection of probability measures defined on $(\R^d,\mathcal{B}(\R^d))$, $\dgen_{\RR_d} = \dKS$ is a metric.
 \end{theorem}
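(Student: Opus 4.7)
The plan is to exploit the fact that $\dgen_\FF$ is already known to be a pseudometric for any IPM (M\"uller's theorem, cited above), so symmetry, non-negativity, the triangle inequality, and the weak identity $\mu = \nu \Rightarrow \dKS(\mu,\nu) = 0$ all come for free. The only substantive remaining property is the strong identity: $\dKS(\mu,\nu) = 0 \Rightarrow \mu = \nu$ on $\mathcal{B}(\R^d)$. The proof therefore reduces entirely to this implication.

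Unfolding definitions, the hypothesis $\dKS(\mu,\nu) = 0$ is exactly the statement that $\mu(R_z) = \nu(R_z)$ for every $z \in \R^d$, where $R_z = \{x \in \R^d : x \preceq z\}$. My plan is to show that the family $\mathcal{P} = \{R_z : z \in \R^d\}$ is a $\pi$-system that generates $\mathcal{B}(\R^d)$, and then invoke Dynkin's $\pi$-$\lambda$ theorem (equivalently, the uniqueness part of Carath\'eodory's extension theorem) to conclude that $\mu$ and $\nu$ agree on all Borel sets. Closure under finite intersection is immediate from the formula $R_z \cap R_{z'} = R_{z^*}$, where $z^*_j = \min\{z_j, z'_j\}$. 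For the generating property, each axis-aligned closed half-space lies in $\sigma(\mathcal{P})$: for coordinate $j \in \{1,\dots,d\}$ and threshold $c \in \R$, define $z^{(n)}$ by $z^{(n)}_j = c$ and $z^{(n)}_i = n$ for $i \neq j$; then $\{x : x_j \leq c\} = \bigcup_{n \in \mathbb{N}} R_{z^{(n)}}$. Since $\mathcal{B}(\R^d)$ is generated by such axis-aligned closed half-spaces (equivalently, by open axis-aligned boxes), $\sigma(\mathcal{P}) = \mathcal{B}(\R^d)$.

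The main delicate step is the generating-set verification above; it is routine measure theory but distinguishes the general Borel case from the finite discrete case already sketched in the excerpt, where $\mu \neq \nu$ could be witnessed by simply picking a lexicographically minimal differing point. Once $\mathcal{P}$ is established as an intersection-closed generator of $\mathcal{B}(\R^d)$ on which $\mu$ and $\nu$ coincide, the $\pi$-$\lambda$ theorem produces $\mu = \nu$ with no further subtleties, completing the proof that $\dKS$ is a genuine metric on probability measures over $(\R^d,\mathcal{B}(\R^d))$.
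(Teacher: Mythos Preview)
Your argument is correct. Both you and the paper reduce to showing the strong identity property, starting from agreement of $\mu$ and $\nu$ on all $R_z$, and both finish with a measure-theoretic uniqueness theorem. The routes diverge in the middle step. The paper first proves an auxiliary lemma showing that agreement on $\RR_d$ forces agreement on the larger class $\bar{\TT}_d$ of (possibly unbounded) axis-aligned rectangles, via an inclusion--exclusion identity expressing $R_z$ as $R_{z,z'}$ together with a union of $2^d-1$ dominating rectangles at the corners of $R_{z,z'}$; it then appeals to the fact that $\bar{\TT}_d$ is a semi-algebra generating $\mathcal{B}(\R^d)$ and invokes a Carath\'eodory-type extension/uniqueness theorem. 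You instead observe directly that $\RR_d$ itself is already a $\pi$-system (since $R_z \cap R_{z'} = R_{z \wedge z'}$) and already generates $\mathcal{B}(\R^d)$ (via the countable-union construction of each closed half-space), so Dynkin's $\pi$--$\lambda$ theorem applies without any detour through $\bar{\TT}_d$. Your route is shorter and avoids the inclusion--exclusion bookkeeping; the paper's route has the minor expository benefit of making explicit the connection between $\RR_d$ and the full rectangle class $\TT_d$ used elsewhere in the paper, but that is not needed for the metric claim itself.
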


Note that the heuristic $\mdKS$ approaches, which only consider quadrants anchored by sample points, do not define $\FF$ independent of the discrete measures being considered, and hence do not automatically inherit the pseudometric properties like the triangle inequality.

\section{Application to Two-Sample Hypothesis Testing}

\label{sec:hTesting}

In the two sample hypothesis testing problem, the practitioner has collected $ Y_1,Y_2,\dots,Y_m \overset{iid}{\sim} \mu$ and $X_1,X_2,\dots,X_n \overset{iid}{\sim} \nu$ and the goal is to determine from these samples whether $\mu = \nu$. In classical terminology, the \textit{null hypothesis} $H_0$ is that $\mu = \nu$ and the \textit{alternative hypothesis} is that $\mu \neq \nu$. In the current work, $\mu$ and $\nu$ are supported on $\mathbb{R}^{d}$ for some $d \in \{1,2,3,4\}$ and we assume for simplicity that the sizes of the samples are the same (i.e., $m=n$). Letting $D_n$ be any function of the two samples with discriminatory power, for any $0 < \delta < 1$ a level-$\delta$ test is constructed by determining a threshold $\eps(n,\delta)$ such that
\[
\mathbb{P}_{H_0} \Big(D_n(Y_1,Y_2,\dots,Y_n,X_1,X_2,\dots,X_n) \geq \eps(n,\delta) \Big) \leq \delta.
\]
A test that then rejects $H_0$ when $D_n$ exceeds $\eps(n,\delta)$ is called a \textit{level-$\delta$ hypothesis test} and $D_n$ is the \textit{test statistic}. Such a test is finite sample valid in that the probability of rejection given the null hypothesis is true is exactly upper bounded by $\delta$. An important consideration when designing a test is the worst-case runtime to conduct a level-$\delta$ test (for $0<\delta<1$) with $n$ samples, a function of $n$ and $\delta$, depending on the runtimes to compute the test statistic and the threshold $\eps(n,\delta)$.

\subsection{Using $\dKS$ for Two-Sample Testing}

Aided by Theorem \ref{thm:SC-Rd}, Corollaries \ref{cor:dKS-d=2}, \ref{cor:dKS-d=3}, and \ref{cor:dKS-d=4} hold with $n = \Omega(\frac{1}{\eps^2} \log(1/\delta))$, respectively for $d \in \{2,3,4\}$; which implies there is some constant $C_{d}$ such that this holds for $n \geq C_{d} \frac{1}{\eps^2} \ln(1/\delta)$. Using this and that $\dKS$ is a metric, we have that for any $n$, $0<\delta<1$, the approximate $\dKS$ distance $\hat{v}$ (from Corollaries \ref{cor:dKS-d=2}, \ref{cor:dKS-d=3}, and \ref{cor:dKS-d=4} respectively), can be used as a test statistic with decision threshold 
\[
\eps(n,\delta) := \sqrt{\frac{C_{d} \ln(1/\delta)}{n}}.
\]

Plugging $\eps \equiv \eps(n,\delta)$ yields a level $0<\delta<1$ test using $\hat{v}$ with runtime 
$O(n \log n \log(1/\delta))$ when $d=2$, 
$O(n \log^3 n \log(1/\delta))$ when $d=3$ and 
$O(n \log^{21} n \log^{O(1)} \log n \log(1/\delta))$ when $d=4$.  
A comparison of the runtime of two-sample hypothesis testing procedures using a version of multidimensional KS ($\mdKS$) as a function of $(n,\delta)$ is provided in Table \ref{tab:testingComp}. Our approach is the first to provide a stable version (see Section \ref{sec:instability}) of multidimensional $\KS$ with near-linear runtime in dimensions $d=2,3,4$.

\begin{table}[t] %
\resizebox{\columnwidth}{!}{%
\begin{tabular}{cccccc}
\hline
\textbf{Citation} & \textbf{Test Statistic} & \textbf{Stable} & \textbf{Precise Test} &  $(n,\delta)$ \textbf{Test Runtime} 
\\ \hline
\cite{peacock1983two,xiao2017fast} & Peacock & Yes & No & $O(n^{d})$    
\\ 
\cite{fasano1987multidimensional,lopes2007two,hagen2021accelerated} & $\mdKS$ & No & No & $O(n \log n)$     
\\ 
this paper & $\dKS$ & Yes & Yes 
& $O(n\log n \log(1/\delta))$ & $d=2$ 
\\ 
this paper & $\dKS$ & Yes & Yes 
& $O(n \log^3 n \log(1/\delta))$ & $d=3$ 
\\ 
this paper &$\dKS$ & Yes & Yes 
& $O(n \log^{21 + \zeta} n \log(1/\delta))$ & $d=4$ 
\\ \hline
\end{tabular}}
\caption{Comparison of Kolmogorov-Smirnov two-sample testing procedures by test statistic and runtime for a constant dimension $d > 1$ since \cite{bickel1969tests} initiated the study.   The last bound holds for any $\zeta >0$ to hide $\mathsf{poly}(\log \log n)$ terms. By \textit{Precise Test}, we mean the probability of rejection under the null hypothesis is exactly upper bounded by $\delta$ without asymptotic or simulation-based approximation. }
\label{tab:testingComp}
\end{table}

\subsection{Precision in Testing}
To apply a two-sample hypothesis test at a $\delta$ level, one needs to calculate a value $\eps(n,\delta)$ for the decision threshold.  In the one-dimensional $\KS$, famously Smirnov~\cite{smirnov1948table} provided tables, and Dvoretzky, Kiefer, and Wolfowitz~\cite{DKW56} provided general bounds.  

However, in prior work in the multi-dimensional case, these values have only been estimated.  Specifically, 
Bickel~\cite{bickel1969tests} showed $\dKS$ was well-defined for this task and provided a permutation test with guaranteed type I error control in finite samples, but the test requires an exponentially large number of $\dKS$ computations (relative to the sample size $n$), rendering this exact approach unusable even for modestly large sample sizes.  
Peacock \cite{peacock1983two} makes an educated guess of the asymptotic distribution of the test statistic after Monte Carlo simulations. For $\mdKS$, Fasano \etal \cite{fasano1987multidimensional} tabulates the critical value tables (as a function of $n,\delta$) via Monte Carlo simulation for their test statistic. Press \cite{press1988kolmogorov} follows \cite{fasano1987multidimensional} with an educated guess of the distribution of the test statistic for $\mdKS$.  Hagen \textit{et.al.} \cite{hagen2021accelerated} arguably provide the most faithful finite sample approximation to the $\mdKS$ statistic, providing an analytical approximation for the cumulative distribution function of the test statistic which only differs from the precise distribution in that orthant volumes are approximated using the samples.

To allow finite sample validity (specifically an exact guarantee on the upper bound of the rejection probability when the null hypothesis is true) in tests with $\dKS$, we need to provide an upper bound for the constant $C_d$.  
This involves two aspects, the sample complexity bounds with respect to the range space $(\R^d, \RR_d)$, and then the approximation from only searching regions in a sparse grid over the sample.  Section \ref{ssec:algo} explained how to achieve $\eps/2$ error in $T$ ranges, applying a Chernoff and Union bound, that $n \geq \frac{2}{\eps^2} \ln (\frac{1}{\delta T})$ samples are sufficient with precision level $\delta$.  The sparse grid we consider in Section \ref{sec:runtime} induces $d \cdot (d/\eps')^d = T$ such ranges, so any range $R_z \in \RR_d$ is within $\eps'$ of some induced range.  Setting $\eps'=\eps/2$ we get $T = d^{d+1}2^d/\eps^d$ and $n \geq \frac{2}{\eps^2} \ln(\frac{1}{\delta}\frac{d^{d+1}2^d}{\eps^d})$ is sufficient.  Setting $\eps = \sqrt{\frac{C_d \ln(1/\delta)}{n}} = \sqrt{\frac{\ln \frac{1}{\delta T}}{n}}$ and using the bound $T = d^{d+1} 2^d /\eps^d$ to 
solve for $C_d \leq 2 \ln(d^{d+1}2^d/\eps^d) = 2d \ln(d 2 d^{1/d}/\eps) \leq 2d \ln(4d/\eps)$ does not provide an absolute constant since it still depends on $\eps$.  However, this bound on $C_d$ is useful for any non-microscopic values of $\eps$.  

To derive a decision threshold using $C_d = \alpha_d \ln(\beta_d / \eps)$ with some constants $\alpha_d$ and $\beta_d$ for a fixed small $d$, we then need to solve for $\eps$ so its only on the left-hand-side.  Now as long as $n$ is sufficiently large, so $\ln(\beta_d^2 n/\alpha_d)\ln(1/\delta) < \sqrt{n/\alpha_d}$, then 
\begin{align*}
\eps (n,\delta) 
&\geq 
\sqrt{\frac{\alpha_d \ln(\beta_d/\eps)}{n} \ln(1/\delta)} 
\\ & \geq 
\sqrt{\frac{\alpha_d \ln \left(\frac{\beta_d \sqrt{n} \ln(\beta_d/\eps)}{\sqrt{\alpha_d}}\ln(\frac{1}{\delta}) \right)}{n}\ln(1/\delta)} 
\\ & \geq 
\sqrt{\frac{\alpha_d \ln (\beta_d n /\alpha_d)}{n} \ln(1/\delta)}.
\end{align*}

The above condition on $n$ is extremely mild (for $d=2$ it is $n > 5 \ln^2(1/\delta)$ $\ln^2(10 \ln (1/\delta))$, for $d=3$,  $n > 7 \ln^2(1/\delta) \ln^2(14 \ln(1/\delta))$, and for $d=4$, $n > 9 \ln^2(1/\delta) \ln^2(18 \ln(1/\delta)$).

For $d=2$, the induced grid error is already accounted for in the above analysis, so we can state $C_2 \leq 4 \ln(8/\eps)$ and 
$\eps(n,\delta) := \sqrt{\frac{4 \ln (2n)}{n}\ln(1/\delta)}$.  
For $d=3,4$ we could apply the same approach via Theorem \ref{thm:dKS-d}, but recall this would result in a roughly $1/\eps^3$ or $1/\eps^4$ time algorithm, which we asymptotically improve in Section \ref{sec:near-lin34}.  These asymptotic runtime improvements result in another additive error $\eps$.  Thus, we instead need to set $\eps' = \eps/3$; this induces $T = d^{d+1}3^d/\eps^d$ and ultimately $C_d = 2d \ln(d 3 d^{1/d}/\eps) \leq 2d \ln(6d /\eps)$.  For $d=3,4$ we have $C_3 \leq 6 \ln (18/\eps)$ so $\eps := \sqrt{\frac{6 \ln (3n)}{n} \ln(1/\delta)}$ 
and $C_4 \leq 8 \ln(24 / \eps)$ so $\eps := \sqrt{\frac{8 \ln (3 n)}{n} \ln(1/\delta)}$.

Bounds for $C_d$ independent of $\eps$ are possible using the mentioned result of Li, Long, and Srinivasan~\cite{LLS01} that uses chaining~\cite{talagrand2014upper} and has large asymptotic factors in the best-known bounds.  
From Csikos and Mustafa~\cite{csikos2022optimal} we derive a bound of $C_d = 2205 d$ (in Appendix \ref{app:precise}), and hence 
$\eps(n,d) \leq \sqrt{\frac{2205 \cdot d}{n} \ln(1/\delta)}$.

However, we recommend the bounds with an additional $\log n$ factor.  To illustrate why, consider the $d=2$ case, where we can now claim
\[
\eps(n,\delta) = \min \left\{ \sqrt{\frac{4 \ln(2n)}{n} \ln(1/\delta)} , \sqrt{\frac{2205\cdot2}{n} \ln(1/\delta)}  \right\}.
\]
And we only employ the second term when $2 \ln(2n) > 2205$, which requires $n > 10^{478}$.

\section{Empirical Performance for d = 2}
\label{sec:epirical}

Here we provide an empirical comparison between \ourAlgo (Algorithm \ref{alg:ouralgo}), which runs in $O(n \log(n) )$ time, versus the \baseline (Algorithm \ref{alg:baseline}), which checks all distinct dominating rectangles, efficiently updating the value as it sweeps (as in Xiao~\cite{xiao2017fast}), and runs in $O(n^2)$ time for $d=2$. Focusing on the $d=2$ case, we compare first on the task of $\dKS(\mu,\nu)$ approximation, and second on the task of hypothesis testing based on $\dKS$, where we examine statistical test power.

\subsection{Distance Approximation}
\label{sec:runtime-exp}
Since the accuracy and runtime of the methods are distribution independent, we keep things simple and set $\mu = \nu = \mathsf{Unif}([0,1]^2)$; the uniform distribution over the unit square.  Thus, we know $\dKS(\mu,\nu) =0$, and therefore the observed error is just the value $\hat v$ estimated by the algorithm.  We increase the size of the samples drawn from $\mu, \nu$ (using $n$ to denote the size of each).  In Figure \ref{fig:plots} we measure and plot the runtime and error as a function of $n$.   All run times and error measurements are shown as the average over 20 samples.

\begin{figure}
\centering
\includegraphics[width=0.49\textwidth]{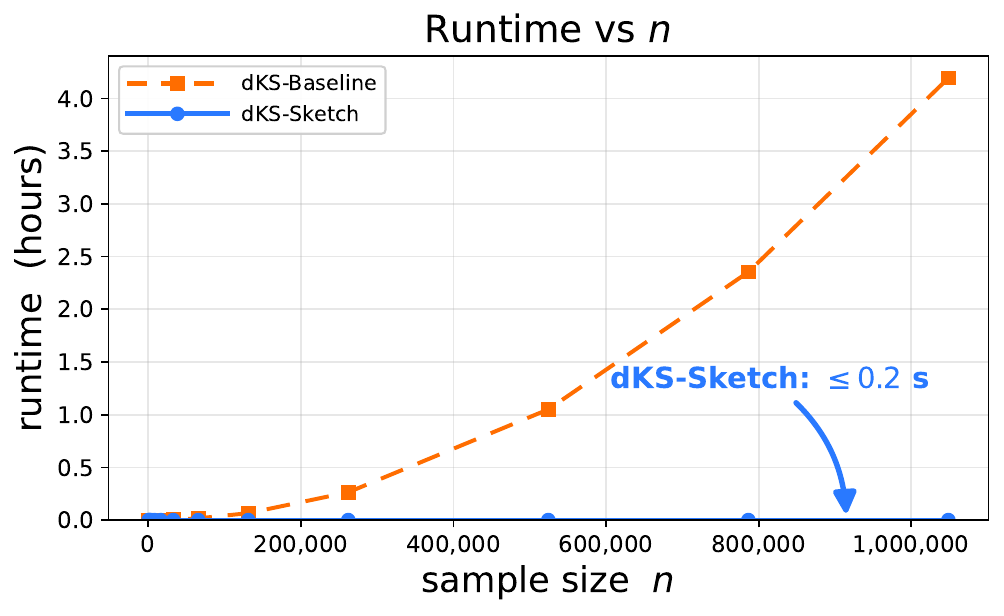}\hfill
\includegraphics[width=0.49\textwidth]{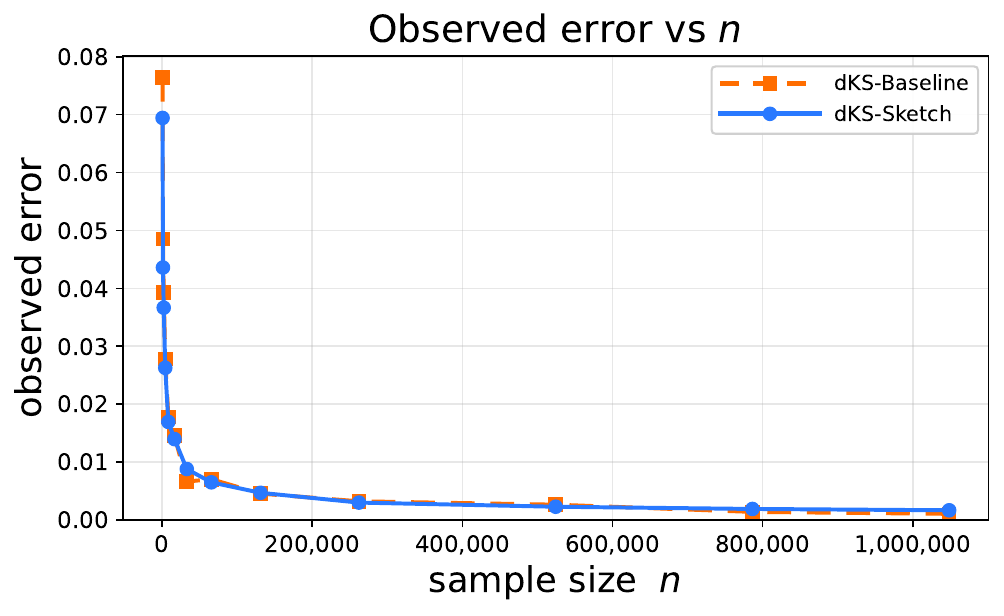}
\includegraphics[width=0.7\textwidth]{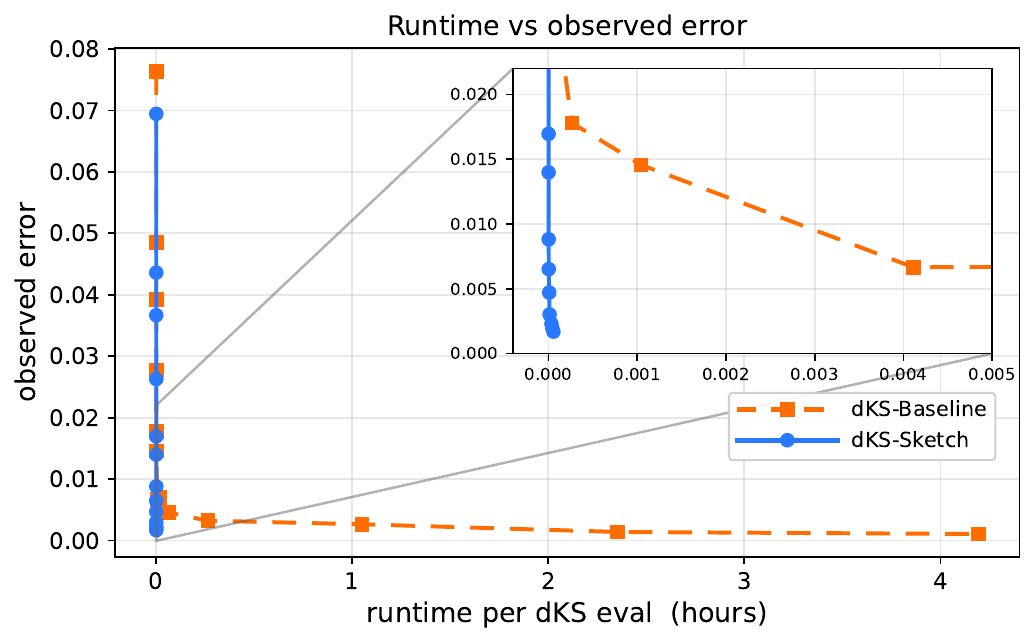} 
\caption{
Runtime and observed error vs.\ sample size $n$ for the \baseline 
(dashed orange) and \ourAlgo (solid blue) on $\mu=\nu=\mathsf{Unif}([0,1]^2)$. Since the true
$\dKS$ is $0$, the observed error is each algorithm's returned value.  
Bottom shows empirical computational-statistical runtime, with observed error vs.\ runtime.  }
\label{fig:plots}
\end{figure}

We see our \ourAlgo has runtime increasing near-linearly in $n$, while the baseline grows quadratically. At
$n{\approx}1.05$M the \baseline takes about $4.2$ hours, whereas \ourAlgo returns
in about $0.2$ seconds --- a factor of roughly $76{,}000$, and the gap widens with $n$. The two
algorithms' observed error stays close throughout and drops under $0.002$ at this $n$, so they converge
at the same rate while \ourAlgo reaches that accuracy at negligible cost.
Finally, in the bottom plot we combine this data to plot observed error as a factor of time, empirically evaluating the computational-statistical runtime.  Dramatically, \ourAlgo achieves extremely small error on a much faster time scale than the \baseline.

\subsection{Power Evaluation}
\label{sec:power-exp}
Next we evaluated the statistical power by demonstrating the efficacy of our direct precise test.  
For this, we use $\mu \neq \nu$ where still $\mu = \mathsf{Unif}([-1,1]^2)$ and $\nu$ is a mixture parameterized by $\alpha$ so $\nu_\alpha = (1-\alpha) \mathsf{Unif}([-1,1]^2) + \alpha \mathcal{N}((0,0), I/10)$.  That is $\nu_\alpha$ mixes the uniform with a small 2-dimensional normal bump at the center, with identity covariance $I$ divided by $10$.  
We fix $\delta = 0.05$ and vary the sample size $n$ (for each contamination level $\alpha$). 
Then sample $P = (X_1,\dots,X_n)$ and $Q = (Y_1,\dots,Y_n)$ where $X_1,\dots,X_n \overset{iid}{\sim} \mu$ and $Y_1,\dots,Y_n \overset{iid}{\sim} \nu_{\alpha}$, and reject the null if we find $\text{\ourAlgo}(P,Q) > \eps(n,\delta) = \sqrt{4 \ln(2n) \ln(1/\delta)/n}$.

\begin{figure}[t]
\centering
\includegraphics[width=0.49\textwidth]{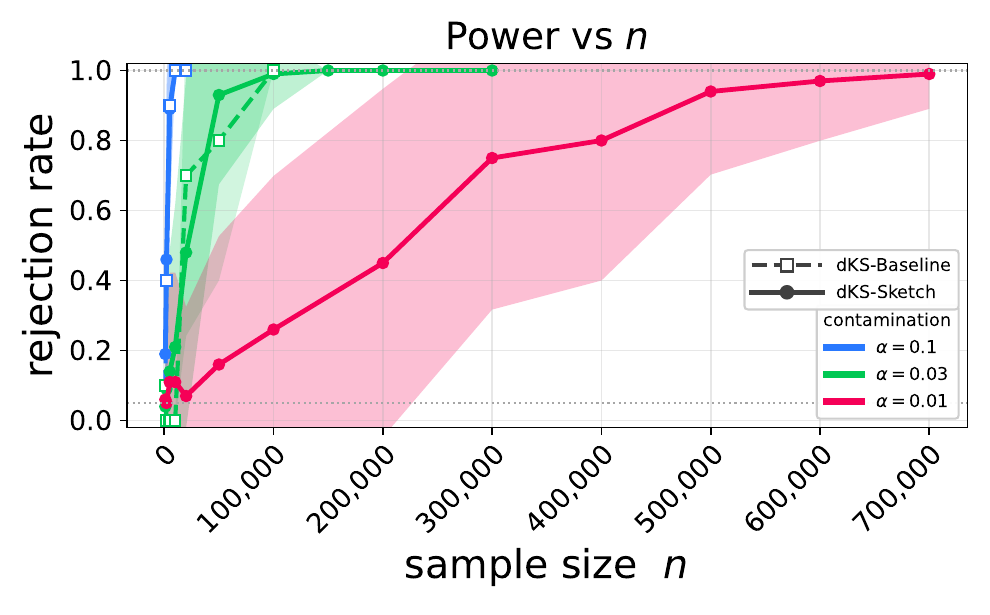}
\includegraphics[width=0.49\textwidth]{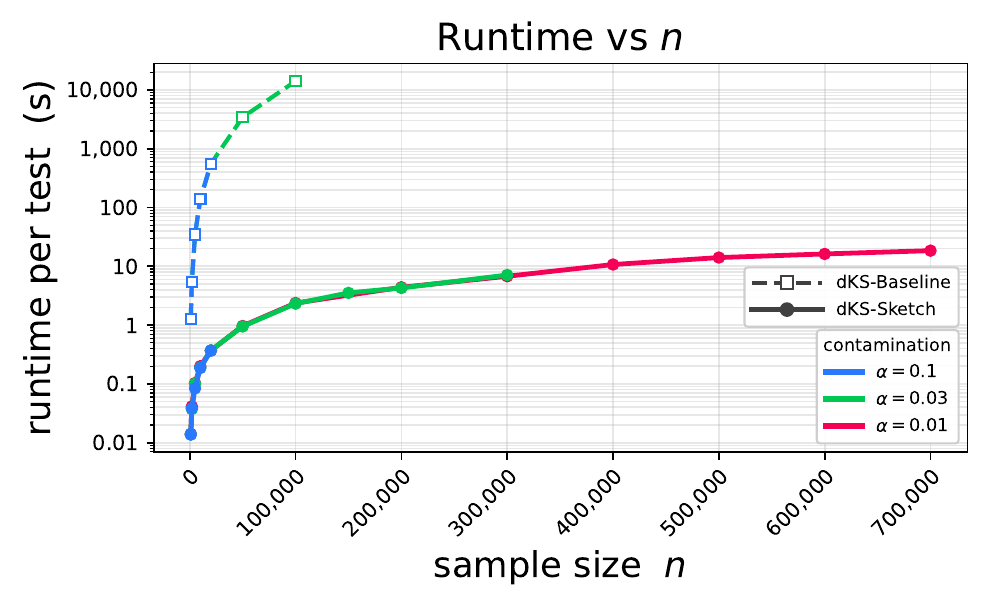}
\includegraphics[width=0.7\textwidth]{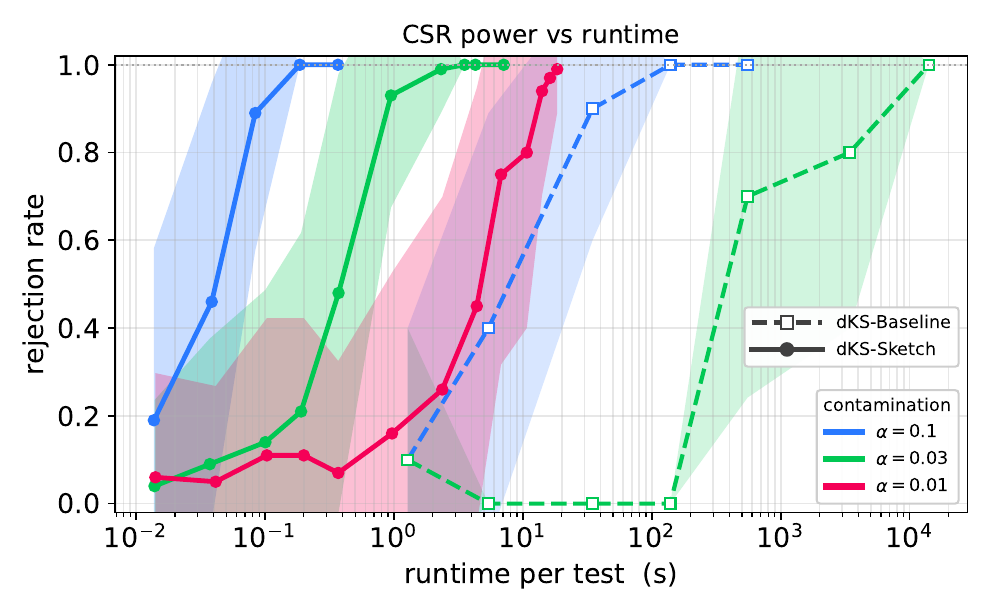}
\caption{Power and runtime of the \ourAlgo (solid, circles) vs.\ Monte Carlo exact test for \baseline with $B=100$ permutations
(dashed, squares).  The plots show runtime in seconds on a log-scale and rejection rate (with level $\delta = 0.05$) for the two-sample test, for contamination $\alpha\in\{0.1,0.03,0.01\}$ (encoded by color).}
\label{fig:power}
\end{figure}

As a comparison for the power, we consider a permutation exact test that builds off of our baseline.  We leverage the permutation test with $B=100$ 
from Proposition 3 in Hemerik and Goeman~\cite{hemerik2018exact}.  For samples $P, Q$ (each of size $n$ with $P$ an iid sample from $\mu$ and $Q$ an iid sample from $\nu_{\alpha}$) we concatenate them into $Z = [P; Q]$ representing $2n$ points in $\R^2$.  We then sample $B-1$ permutations of $Z$ as $Z_1, Z_2, \ldots, Z_{B-1}$.  We compute $v'_j$ for each in two steps:   
\\ (1) splitting $Z_j$ into the first $n$ points as $P_j$ and the last $n$ as $Q_j$ and 
\\ (2) running $\text{\baseline}(P_j,Q_j) \to v'_j$.  
\\ Also compute $v' = \text{\baseline}(P,Q)$ and create the set $S = \{v', v'_1, \ldots, v'_{B-1}\}$.  We set threshold $\tau$ as the $\lceil (1-\delta) B \rceil$th largest value in $S$.  If $v' = \text{\baseline}(P,Q)$ is larger than $\tau$ we reject, if its less than $\tau$ we fail to reject.  If $v' = \tau$ then we reject with probability $(\delta B - M_1)/M_0$ where $M_1$ is the number of $v \in S$ strictly larger than $\tau$, and $M_0$ is the number of $v \in S$ equal to $\tau$.

Figure \ref{fig:power} shows the power results for these two methods.  For the direct precise test with \ourAlgo we set 
$\alpha\in\{0.1,0.03,0.01\}$, and $W=100$ trials to average over.  
For the Monte Carlo \baseline approach for computational constraints we only consider $\alpha = \{0.1, 0.03\}$, and $W=10$ repetitions.  For each we report the mean and show $\pm$ 1 standard deviation.  
In the top two plots we observe that the rejection rate scales about the same for the two methods as a function of $n$.  However \ourAlgo is many orders of magnitude faster that \baseline for the same $n$.  Unlike the experiments in Section \ref{sec:runtime-exp}, this now comes from two sources: the algorithmic improvement $O(n^2)$ to $O(n \log n)$, and the factor $B=100$ Monte Carlo trials.  While our analysis could be used to derive a direct precise test based on \baseline, we are unaware of one in the literature previously, and wanted to highlight this large discrepancy in computational cost.  
This issue is shown more directly in the bottom plot which shows the computational-statistical runtime, with average runtime on the $x$-axis and rejection rate on the $y$-axis.  
At $\alpha=0.1$, \ourAlgo is fully powered in less than a second, where as \baseline takes about $100$ seconds; this is mostly attributed to the Monte Carlo factor $B=100$.  However at $\alpha = 0.03$ while \ourAlgo achieves full power at under $3$ seconds, \baseline takes about $10,000$ seconds (about 3 hours).  This difference is due to both factors -- runtime improvement and precise vs. Monte Carlo test.  Then for $\alpha=0.01$ full power requires about $n = 7 \times 10^5$.  At this scale \ourAlgo takes about 18 seconds, whereas \baseline would need about 8 days per replication (we did not attempt to run this $\approx$ 80 day experiment).  
Altogether, \ourAlgo offers a dramatic improvement in computational-statistical runtime.

\section{Discussion on Future Directions}
\label{sec:discussion}
The core of our implementation is C++, and also available with Python front end:  \url{https://github.com/foadnamjoo/dKS}.

The described algorithms for $d=3,4$, which are near-linear in $n$, do not appear to be simple to implement since they rely on a reduction to a complex recursive procedure for Klee's measure problem~\cite{chan2013klee} -- despite this approach having been ``made easy'' in comparison to prior solutions.  Yet, we are hopeful that there are even simpler and more practical solutions for $d > 2$ that also achieve near-linear runtime in $n$.  In particular, for $d=3$, the Klee's measure part only takes $\tilde O(1/\eps^{3/2})$ time, so there should be room to trade off theoretical runtime for practical improvements.  

Moreover, we believe that our framework of \emph{computational-statistical runtime} for statistical measures, where we assume that our only access to the input is $O(1)$-time samples, is relatively unexplored.  We expect that future work will develop this setting for other statistical measures.  In particular, the approach we take three natural steps: (1) draw $n_\eps$ samples, (2) sketch the data into a sparser representation, and (3) solve for the distance.  These appear to be a general paradigm for future study in these kinds of problems.

\bibliography{coreset.bib}

\appendix

\section{Measure Theoretic Metric Properties}
\label{app:metric}

To prove that $\dKS$ is a metric on measures $\mu,\nu$, we will need to define and deploy some technical concepts from measure theory.  Here we  first remind the reader of relevant language and notation. Recall that a collection of subsets $S$ of a ground set $\mathcal{X}$ is called a sigma-algebra if $S$ contains $\mathcal{X}$ and is closed under complements and countable unions. Moreover, regardless of whether or not $S$ is a sigma-algebra, the smallest sigma-algebra containing $S$ is called the sigma-algebra \textit{generated} by $S$. Moreover, $S$ is called a semi-algebra if it contains the empty set and $\mathcal{X}$, is closed under finite intersections, and the complement of every set in $S$ is expressible as a finite union of other sets in $S$. $\mathcal{B}({\mathbb{R}^{d}})$ is the notation for the smallest sigma algebra of subsets of $\mathbb{R}^{d}$ generated by the open sets of $\mathbb{R}^{d}$; these are the Borel sets of $\mathbb{R}^{d}$. This set of sets contains the reasonable sets any probability measure on $\mathbb{R}^{d}$ should be able to assign measure to.

We will also need a small amount of additional notation for the differences in the ranges we consider, and how we define sets, especially to handle unbounded ranges.  It will be useful to prove some results for traditional rectangles $\TT_d$ and map back to dominating rectangles $\RR_d$.  
For $z',z \in \bar{\mathbb{R}}^{d}$ where $\bar{\mathbb{R}}$ is the extended real line, $z' \preceq_{\infty} z$ means whenever the $j$th coordinate of $z$ is finite, $z'_j \leq z_j$, and when the $j$th coordinate of $z$ is infinity, $z'_j < z_j$. Earlier we defined the axis aligned rectangles in $\mathbb{R}^{d}$ as $\TT_d := \{R_{z,z'} \mid z,z' \in \mathbb{R}^{d},z' \preceq z \}$ where $R_{z,z'} = \{x \in \mathbb{R}^{d} \mid z' \prec x \preceq z\}$. Now we define a larger class of rectangles consisting of bounded and unbounded rectangles. Specifically, for $z,z' \in \bar{\mathbb{R}}^{d}$, let $R_{z,z'}^{\infty} := \{ x \in \mathbb{R}^{d} \mid z' \prec x \preceq_{\infty} z \}$, and define $\bTT_d := \{R_{z,z'}^{\infty} \mid z,z' \in \bar{\mathbb{R}}^{d},z' \preceq z\}$. We start with a lemma that shows probability measures agreeing on the dominating rectangles also agree on $\bar{\TT_d}$.
\begin{lemma}
\label{lem:rect-agree}
If $\mu,\nu$ are probability measures on $(\mathbb{R}^{d},\mathcal{B}(\mathbb{R}^{d}))$ agreeing on $\RR_d$, then $\mu,\nu$ agree on $\bar{\TT_d}$.
\end{lemma}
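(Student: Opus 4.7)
The plan is to extend agreement from $\RR_d$ to $\bTT_d$ in two stages: first from orthants $R_z$ with finite corners to orthants $R_z^\infty$ whose corners may have $\pm\infty$ coordinates, using continuity of measure; then from these possibly-unbounded orthants to general half-open rectangles $R_{z,z'}^\infty$ via an inclusion-exclusion that peels off the lower boundary.

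For the first stage I would fix $z \in \bar{\mathbb{R}}^d$; if any coordinate of $z$ is $-\infty$ then $R_z^\infty = \emptyset$ and both measures assign it $0$, so I may assume each $z_j \in (-\infty, +\infty]$. I would then build $z^{(k)} \in \R^d$ by replacing every $+\infty$ entry of $z$ with $k$ and leaving finite entries untouched. The sets $R_{z^{(k)}} \in \RR_d$ are nested with $\bigcup_k R_{z^{(k)}} = R_z^\infty$, so continuity from below of the finite probability measures yields
\[
\mu(R_z^\infty) \;=\; \lim_k \mu(R_{z^{(k)}}) \;=\; \lim_k \nu(R_{z^{(k)}}) \;=\; \nu(R_z^\infty),
\]
the middle equality coming from the hypothesis that $\mu$ and $\nu$ agree on $\RR_d$.

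For the second stage I would fix $z, z' \in \bar{\mathbb{R}}^d$ with $z' \preceq z$ and write
\[
R_{z,z'}^\infty \;=\; R_z^\infty \setminus \bigcup_{j=1}^d A_j, \qquad A_j \;:=\; \{x \in R_z^\infty : x_j \leq z'_j\}.
\]
For every $S \subseteq \{1,\ldots,d\}$, define $w^S \in \bar{\mathbb{R}}^d$ by $w^S_j = z'_j$ for $j \in S$ and $w^S_j = z_j$ for $j \notin S$ (with the convention $w^\emptyset = z$). A direct check using $z' \preceq z$ shows $\bigcap_{j \in S} A_j = R_{w^S}^\infty$ for every nonempty $S$. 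Finite additivity and the standard inclusion-exclusion identity then give
\[
\mu(R_{z,z'}^\infty) \;=\; \sum_{S \subseteq \{1,\ldots,d\}} (-1)^{|S|}\, \mu(R_{w^S}^\infty),
\]
with the identical formula holding for $\nu$. Since each $w^S \in \bar{\mathbb{R}}^d$, Stage 1 already guarantees $\mu(R_{w^S}^\infty) = \nu(R_{w^S}^\infty)$ term by term, and the lemma follows.

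The only subtlety I anticipate is careful bookkeeping at the boundary of $\bar{\mathbb{R}}^d$: verifying that whenever some $z'_j = -\infty$ (for $j \in S$) the resulting $w^S$ produces an empty orthant, contributing $0$ to both sides harmlessly; and checking that the strict lower inequality $z' \prec x$ defining $R_{z,z'}^\infty$ translates faithfully into the complement of the half-spaces $A_j$ defined with $\leq$. Neither issue looks serious, which is why I expect the main work of the proof to be the two-stage decomposition rather than any individual step within it.
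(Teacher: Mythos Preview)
Your proposal is correct and rests on the same two ingredients as the paper's proof---inclusion--exclusion over the $2^d$ corner orthants, plus a limiting argument---but you assemble them in the opposite order. The paper first treats bounded rectangles by writing $R_z = R_{z,z'} \cup \bigl(\bigcup_{v\neq z} R_v\bigr)$ as a disjoint union and applying inclusion--exclusion to the union of orthants, then handles unbounded rectangles afterward by expressing each as a countable disjoint union of bounded ones. You instead first extend agreement from $\RR_d$ to orthants with $\bar{\R}^d$-valued corners via continuity from below, and then invoke the classical alternating-sum formula $\mu(R_{z,z'}^\infty)=\sum_S(-1)^{|S|}\mu(R_{w^S}^\infty)$ uniformly for all rectangles at once. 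Your organization is arguably cleaner: it avoids the separate disjoint-decomposition step for unbounded rectangles and lands directly on the textbook ``CDF determines rectangle probabilities'' identity. The edge cases you flag (some $z'_j=-\infty$ forcing $R_{w^S}^\infty=\emptyset$, and the strict-versus-nonstrict inequality in defining $A_j$) are indeed routine and do not hide any obstruction.
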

\begin{proof}
    Consider first a rectangle $R_{z,z'}^{\infty} \in \bTT_d$ that is bounded. That is, $z,z' \in \mathbb{R}^{d}$. In particular, $R_{z,z'}^{\infty} = R_{z,z'}$. Denote by $\mathcal{V}(R_{z,z'})$ the $2^d$ corners of $R_{z,z'}$. Then we first claim that 
    \[
    R_{z} = R_{z,z'} \cup \left(\bigcup_{v \in \mathcal{V}(R_{z,z'}) \setminus \{z\} }  R_{v} \right).
    \]
    To see this, we will prove the two set containments. For reverse containment, first note that since $z' \preceq z$, for $v \in \mathcal{V}(R_{z,z'})$, $v \preceq z$. In particular, $R_{v} \subseteq R_z$. Also, it is immediate that $R_{z,z'} \subseteq R_{z}$, so we conclude reverse containment. For forward containment, suppose $x \in R_{z}$. Then since $R_{z,z'} \subseteq R_{z}$, either $x \in R_{z,z'}$ or not. And if not, we know that both $x \preceq z$ and it is not the case that $z' \prec x$. In particular, there exists a non-empty set of indices $S(x) \subseteq [d]$ such that $x_j \leq z'_j$ for $j \in S(x)$. And for all indices $j \in [d]\setminus S(x)$, $x_j \leq z_j$. In particular, letting $v$ be the vertex with value $z'_j$ at indices $j \in S(x)$ and $z_j$ at indices $j \in [d] \setminus S(x)$, we have that $x \preceq v$. That is, $x \in R_{v}$, and since $S(x)$ is non-empty, $v \in \mathcal{V}(R_{z,z'}) \setminus \{z\}$, so $x \in \left(\bigcup_{v \in \mathcal{V}(R_{z,z'}) \setminus \{z\} }  R_{v} \right)$. We have thus shown if $x \in R_{z}$, then either $x \in R_{z,z'}$ or $x \in \left(\bigcup_{v \in \mathcal{V}(R_{z,z'}) \setminus \{z\} }  R_{v} \right)$, so containment holds.

    Now recall that $\RR_d$ is closed under finite intersections, and the inclusion-exclusion formula (see for example \cite{resnick2013probability}) guarantees that for a probability measure, the measure of a finite union is precisely a finite sum of probability measures of finite intersections of members of the union. Therefore, since $\mu$ and $\nu$ agree on $\RR_d$, they must agree on $\bigcup_{v \in \mathcal{V}(R_{z,z'}) \setminus \{z\} }  R_{v}$. Finally, note that $R_{z,z'}$ is disjoint from $\bigcup_{v \in \mathcal{V}(R_{z,z'}) \setminus \{z\} }  R_{v}$. Using this, the set equality, and finite additivity, we conclude $\mu$ and $\nu$ agree on $R_{z,z'} = R_{z,z'}^{\infty}$.

    All that is left to show is that $\mu,\nu$ agree on unbounded rectangles as well. But every unbounded rectangle is a countable union of disjoint bounded rectangles. So by countable additivity and since we have just shown $\mu,\nu$ agree on all bounded rectangles, they must also agree on all unbounded rectangles. In particular, $\mu,\nu$ agree on $\bar{\TT_d}$.
\end{proof}

Now we can prove our main result about metric properties of $\dKS$.  

\begin{theorem}[Restated Theorem \ref{thm:dKS-metric}]
    \label{thm:dKS-metric-app}
    On the collection of probability measures defined on $(\R^d,\mathcal{B}(\R^d))$, $\dgen_{\RR_d} = \dKS$ is a metric.
 \end{theorem}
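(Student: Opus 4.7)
The plan is to reduce the problem to the strong identity property, since the other metric axioms come essentially for free from the IPM framework. Symmetry of $\dKS$ is immediate from the absolute value in its definition, and the triangle inequality follows from the triangle inequality for the absolute value together with the fact that the supremum of a sum is bounded by the sum of the suprema. The weak identity $\mu = \nu \Rightarrow \dKS(\mu,\nu)=0$ is trivial because both integrals over any $R_z \in \RR_d$ coincide. All of this is standard for IPMs (see M\"uller~\cite{muller1997integral}), so the content of the theorem is really the reverse implication: $\dKS(\mu,\nu) = 0 \Rightarrow \mu = \nu$ on $\mathcal{B}(\R^d)$.

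For the strong identity, I would proceed as follows. First, observe that $\dKS(\mu,\nu) = 0$ forces $\mu(R_z) = \nu(R_z)$ for every $R_z \in \RR_d$, since the supremum of nonnegative quantities is zero iff each is zero. Then I would invoke Lemma \ref{lem:rect-agree} to upgrade this agreement from $\RR_d$ to the larger class $\bTT_d$ of (possibly unbounded) axis-aligned rectangles. This gives me two probability measures that agree on all axis-aligned rectangles.

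Next I would apply Dynkin's $\pi$-$\lambda$ theorem (equivalently, the uniqueness part of Carath\'eodory extension). The collection $\bTT_d$ is closed under finite intersections, so it is a $\pi$-system. The $\lambda$-system on which two probability measures $\mu,\nu$ agree is a $\lambda$-system containing $\bTT_d$; hence it contains the $\sigma$-algebra generated by $\bTT_d$. To close the argument I would then note that open sets in $\R^d$ are countable unions of bounded open rectangles (standard), and bounded open rectangles are countable unions of half-open rectangles from $\TT_d \subseteq \bTT_d$, so $\sigma(\bTT_d) = \mathcal{B}(\R^d)$. Therefore $\mu$ and $\nu$ agree on every Borel set, i.e., $\mu = \nu$, proving the strong identity.

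The main obstacle, and the only genuinely technical point, is the passage from $\RR_d$ to $\bTT_d$; but this has already been handled by Lemma \ref{lem:rect-agree} via an inclusion-exclusion decomposition of $R_{z,z'}$ using the $2^d$ corner-dominating regions. After that step, everything else is either immediate from the IPM structure or a routine application of standard measure-theoretic uniqueness results, so no delicate estimates are required.
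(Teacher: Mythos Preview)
Your proposal is correct and essentially matches the paper's proof: both reduce to strong identity via the IPM framework, invoke Lemma~\ref{lem:rect-agree} to pass from $\RR_d$ to $\bTT_d$, and then apply a standard uniqueness result to conclude agreement on $\mathcal{B}(\R^d)$. The only cosmetic difference is that the paper cites the Carath\'eodory-type extension theorem (Resnick, Theorem~2.4.3) using that $\bTT_d$ is a semi-algebra, whereas you use Dynkin's $\pi$-$\lambda$ theorem using that $\bTT_d$ is a $\pi$-system; as you yourself note, these are equivalent here.
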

 \begin{proof}
     Because $\dKS$ is in the form of an IPM, it is at least a pseudometric, and what remains is to show it satisfies the strong form of identity: that if $\mu,\nu$ are probability measures on $(\mathbb{R}^d,\mathcal{B}(\mathbb{R}^{d}))$, then $\dKS(\mu,\nu) =0$ implies that $\mu = \nu$. So suppose $\dKS(\mu,\nu) =0$. We will use a measure theory extension theorem, Theorem 2.4.3 of \cite{resnick2013probability}. Consider a set function defined on a semi-algebra of subsets of $\mathbb{R}^{d}$ that assigns measure $1$ to $\mathbb{R}^{d}$ and is countably additive on the semi-algebra; this theorem states that this set function has a unique extension to the sigma algebra generated by the semi-algebra. To use the theorem we will choose a semi-algebra that generates $\mathcal{B}(\mathbb{R}^d)$. It is classically known (\cite{durrett2019probability} Chapter 1, section 1) that $\bar{\TT_d}$ is a semi-algebra, and the sigma algebra generated by $\bar{\TT_d}$ is $\mathcal{B}(\mathbb{R}^{d})$. Also $\mu,\nu$ are probability measures on $\mathcal{B}(\mathbb{R}^{d})$ so they are countably additive on any collection of subsets, including $\bar{\TT_d}$. Finally, since by assumption $\mu,\nu$ agree on $\RR_d$, by Lemma \ref{lem:rect-agree} $\mu,\nu$ also agree on $\bar{\TT_d}$.  So by Theorem 2.4.3 of \cite{resnick2013probability}, $\mu$ and $\nu$ agree on $\mathcal{B}(\mathbb{R}^{d})$.
 \end{proof}

\section{Asymptotically Optimal Precision Testing Bounds}
\label{app:precise}

In this section we derive an absolute constant $C_d$ to be used in the error threshold $\eps(n,\delta)$ for $\delta$-precision testing with $\dKS$, in the formula
\[
\eps(n,\delta) = \sqrt{\frac{C_d \log(1/\delta)}{n}}.
\]

Bounds for $C_d$ can be derived via sample complexity bounds for approximating a range space with bounded combinatorial complexity~\cite{VC71}.  Bounds for $C_d$ independent of $\eps$ in this context first appeared in Li, Long, and Srinivasan~\cite{LLS01} using Talagrand's chaining~\cite{talagrand2014upper} technique.  However, these have large asymptotic factors, and while some constants are provided, they are not easily pieced together to get the desired bound on $C_d$.  

More recently Csikos and Mustafa~\cite{csikos2022optimal} provided an alternate derivation of such a bound from which we can derive a constant $C_d = 2205 d$.  This starts with a stronger $(\alpha,\eps)$-approximation of a range space $(X,\R_d)$ with a sample $S$ of size $n$ so that for all $R \in \RR_d$ we have
\[
\left| \frac{|R \cap X|}{|X|} - \frac{|R \cap S|}{n}   \right| \leq \alpha \max \left\{ \frac{|R \cap X|}{|X|},  \beta  \right\}.
\]
Their bound requires $0< \alpha, \beta, \gamma \leq 1/2$ and 
\[
n \geq c_2 \frac{d}{\alpha \beta} \left(\frac{1}{\beta} \ln \frac{1}{\alpha} + \ln \frac{e t}{d \gamma}\right)
\]
and succeeds with probability at least $1-\gamma$, where 
\[
t = c_1 \frac{1}{\alpha \beta^2} \left(d \ln(\frac{1}{\alpha \beta}) + \ln(1/\gamma)\right).
\]
Applying this in our context, we require $\beta = (2/3) \eps$, $\gamma = \delta/2$, and we can set $\alpha = 1/2$ so (the LHS above) $\alpha \beta \leq \eps/3$.  They report this works for constants $c_1 = 318$ and $c_2 = 70$.  

First we can derive 
\begin{align*}
  t 
  & = 
  c_1 \frac{1}{\alpha \beta^2} \left(d \ln(\frac{1}{\alpha \beta}) + \ln(1/\gamma)\right)
  \\ & = 
  318 \frac{1}{(1/2) ((2/3)\eps)^2} \left(d \ln \left(\frac{1}{(1/2) ((2/3)\eps)}\right) + \ln(2/\delta)\right)
  \\ & = 
  \frac{1431}{\eps^2} \left(d \ln(3/\eps) + \ln(2/\delta)\right)
  \\ & \geq 
  \frac{1431 d}{\eps^2} \left(\ln(3/\eps) + \ln(2/\delta)\right)\end{align*}
This leads to a bound of 
\begin{align*}
  n 
  &\geq c_2 
  \frac{d}{\alpha \beta} \left(\frac{1}{\beta} \ln \frac{1}{\alpha} + \ln \frac{e t}{d \gamma}\right)
  \\ &=
  70 \frac{d}{\eps/3} \left(\frac{3}{2\eps} \ln 2  + \ln \left(\frac{2 e}{\delta}\frac{t}{d} \right)\right)
  \\ & \geq
  210 \frac{d}{\eps} \left(\frac{3}{2\eps} + \ln \left(\frac{2 e }{\delta} \frac{1431}{\eps^2}(\ln(3/\eps) + \ln(2/\delta) \right)\right)
  \\ & \geq
  210 \frac{d}{\eps} \left(\frac{3}{2\eps} + \ln \left(\frac{7780}{\delta^2 \eps^3} \right)\right)
  \\ & \geq
  210 \frac{d}{\eps} \left(\frac{3}{2\eps} + 3 \ln(20/\eps \delta) \right)
  \\ & \geq
  210 \frac{d}{\eps} \left(\frac{3}{2\eps} + \frac{9}{\eps} \ln(1/\delta) \right)
  \\ & \geq
  210 \frac{d}{\eps} \left(\frac{21}{2\eps}\ln(1/\delta) \right)
  \\ & \geq
  2205 \frac{d}{\eps^2} \ln(1/\delta).  
\end{align*}

\end{document}